\tikzset{>=latex,
	point/.style = {circle,draw,thick,minimum size=2mm,inner sep=0pt},
}
\newcommand{\slin}[3]
{
\scriptsize
\draw[-, very thick] ($({#1}) +(0,0.09)$) -- node[above=0.1] {#2}  node[below=0.1] {#3} ($({#1}) +(0,-0.09)$);
}
\newtheorem{theorem}{Theorem}
\newtheorem{lemma}[theorem]{Lemma}
\newtheorem{corollary}[theorem]{Corollary}
\newtheorem{example}[theorem]{Example}
\newfont{\midmathxx}{cmsy10 scaled 1440}
\newfont{\bigmathxx}{cmsy10 scaled 1440}
\newfont{\smallmathxx}{cmsy10 scaled 720}
\newcommand{\OWLQL}{\textsl{OWL\,2\,QL}}
\newcommand{\NL}{\textsc{NL}}
\newcommand{\ACz}{{\ensuremath{\textsc{AC}^0}}}
\newcommand{\NCo}{{\ensuremath{\textsc{NC}^1}}}
\newcommand{\coNP}{\textsc{coNP}}
\newcommand{\NP}{\textsc{NP}}
\newcommand{\PTime}{\textsc{P}}
\DeclareFontFamily{U}{MnSymbolC}{}
\DeclareSymbolFont{MnSyC}{U}{MnSymbolC}{m}{n}
\DeclareMathSymbol{\diamondminus}{\!}{MnSyC}{120}
\DeclareMathSymbol{\diamondplus}{\!}{MnSyC}{"7C}
\DeclareFontShape{U}{MnSymbolC}{m}{n}{
    <-5>  MnSymbolC4
    <5-6>  MnSymbolC5
   <6-7>  MnSymbolC6
   <7-8>  MnSymbolC7
   <8-9>  MnSymbolC8
   <9-10> MnSymbolC9
  <10-12> MnSymbolC10
  <12->   MnSymbolC12}{}
\newcommand{\LTL}{\textsl{LTL}}
\newcommand{\MTL}{\textsl{MTL}}
\newcommand{\hMTL}{\textsl{hornMTL}}
\newcommand{\cMTL}{\textsl{coreMTL}}
\newcommand{\dMTL}{\textsl{datalogMTL}}
\newcommand{\diamondmin}{\diamondminus\hspace*{-0.5mm}}
\newcommand{\range}{\varrho}
\newcommand{\D}{\mathcal{D}}
\newcommand{\A}{\mathcal{A}}
\newcommand{\I}{\mathcal{I}}
\newcommand{\C}{\mathfrak C}
\newcommand{\cl}{\mathsf{cl}}
\newcommand{\type}{{\boldsymbol{t}}}
\newcommand{\types}{{\boldsymbol{s}}}
\newcommand{\trace}{{\boldsymbol{tr}}}
\newcommand{\q}{{\boldsymbol{q}}}
\newcommand{\Q}{{\boldsymbol{Q}}}
\newcommand{\tem}{{\mathsf{ts}}}
\newcommand{\slit}{\sigma}
\newcommand{\tlit}{\tau}
\newcommand{\lit}{\lambda}
\newcommand{\inr}{\mathsf{in}}
\newcommand{\suc}{\mathsf{suc}}
\title{Data Complexity and Rewritability of Ontology-Mediated Queries in Metric Temporal Logic under the Event-Based Semantics}
\author{Vladislav Ryzhikov\\
Birkbeck, University of London, UK\\
\texttt{vlad@dcs.bbk.ac.uk}\\
\And
Przemyslaw Andrzej Walega\\
University of Oxford, UK and University of Warsaw, Poland\\
\texttt{przemyslaw.walega@cs.ox.ac.uk}\\
\And
Michael Zakharyaschev\\
Birkbeck, University of London, UK\\
\texttt{michael@dcs.bbk.ac.uk}
}
\begin{document}

\maketitle

\begin{abstract}
We investigate the data complexity of answering queries mediated by ontologies given in metric temporal logic \MTL{} under the event-based semantics assuming that data instances are finite timed words with binary fractions as timestamps. We identify classes of ontology-mediated queries answering which can be done in \ACz{}, \NCo, L, \NL, \PTime{}, and \coNP{} for data complexity, provide their rewritings to first-order logic and its extensions with primitive recursion, transitive closure or datalog, and establish lower complexity bounds.
\end{abstract}

\section{Introduction}

In this paper, we are concerned with the following problem: given a formula $\Pi$ of metric temporal logic \MTL{} and an atomic proposition $A$, is it possible to construct a query $\Q(x)$ in some standard query language such that, for any data instance $\D$ of atoms timestamped with  binary fractions and any timestamp $t$ from $\D$, we have $\Pi,\D \models A(t)$ iff $\Q(t)$ is true in $\D$?

\MTL{} was originally designed for modelling and reasoning about real-time systems \cite{DBLP:journals/rts/Koymans90,DBLP:journals/iandc/AlurH93}; for a survey see~\cite{DBLP:reference/mc/BouyerFLMO018}. Recently, combinations of \MTL{} with description logics have been suggested as temporal ontology languages~\cite{DBLP:conf/ecai/Gutierrez-Basulto16,DBLP:conf/frocos/BaaderBKOT17}. Datalog with \MTL-operators was used by~\cite{DBLP:journals/jair/BrandtKRXZ18,DBLP:conf/semweb/MehdiKSXKBHRR17} for practical ontology-based access to temporal log data aiming to facilitate detection and monitoring complex events in asynchronous systems based on sensor measurements.
For example, a Siemens turbine has a coast down if the rotor speed was below 1500 in the previous 30 seconds, while no more than 2 minutes before that the speed was above 6600 for 30 seconds. The event `coast down' can be encoded by the following $\MTL$-formula, where $\diamondmin_{(r,s]} \varphi$ ($\boxminus_{(r,s]} \varphi$) is true at a timestamp $t$ if $\varphi$ holds at some (respectively, all) $t'$ with $r < t-t' \le s$:
$$
\boxminus_{(0,30s]} \mathsf{speed}_{< 1500} \land {}
\diamondminus_{(0, 2m]} \boxminus_{(0,30s]} \mathsf{speed}_{> 6600} \to
 \mathsf{cdown}.
$$
%
%
To find when a coast down occurred, a Siemens engineer can now simply execute the query $\mathsf{cdown}(x)$ mediated by an \MTL-ontology with formulas such as the one above, whose atoms are related to sensor data by appropriate mappings.
Answering \dMTL{} queries in the streaming setting was considered by~\cite{WalegaAAAI}.

The underpinning idea of classical ontology-based data access (OBDA) \cite{CDLLR07,IJCAI-18} is a reduction of ontology-mediated query (OMQ) answering to standard database query evaluation. As known from descriptive complexity~\cite{Immerman99}, the existence of such reductions, or \emph{rewritings}, is closely related to the data complexity of OMQ answering, which is by now well understood for atemporal OMQs both uniformly (for all OMQs in a given language) and non-uniformly (for individual OMQs)~\cite{DBLP:journals/tods/GottlobOP14,DBLP:conf/rweb/BienvenuO15,DBLP:journals/tods/BienvenuCLW14,DBLP:conf/ijcai/LutzS17}.


Temporal ontology and query languages have attracted attention of datalog and description logic communities since the 1990s; see~\cite{DBLP:books/bc/tanselCGSS93/BaudinetCW93,DBLP:conf/dagstuhl/ChomickiT98,DBLP:conf/time/LutzWZ08,DBLP:conf/time/ArtaleKKRWZ17} for surveys. In recent years, the proliferation of temporal data from various sources and its importance for analysing the behaviour of complex systems and decision making in all economic sectors have intensified research into formalisms that can be used for querying temporal databases and streaming data~\cite{DBLP:journals/jaise/SoyluGSJKONB17,DBLP:journals/ai/BeckDE18,DBLP:conf/aaai/RoncaKGMH18}. OBDA with atemporal ontologies and query languages with linear temporal logic \LTL{} operators has been in use since~\cite{BBL13,DBLP:conf/rweb/OzcepM14}. Rewritability and data complexity of OMQs in the description logics \textsl{DL-Lite} and $\mathcal{EL}$ extended with \LTL{} operators were considered in~\cite{DBLP:conf/ijcai/ArtaleKKRWZ15,DBLP:conf/ijcai/Gutierrez-Basulto16}.


Here, we investigate the (uniform) rewritability and data complexity problems for basic OMQs given in metric temporal logic \MTL, assuming that data instances are finite sets of atoms timestamped by dyadic rationals and that \MTL{} is interpreted under the event-based semantics where atoms refer to events (state changes) rather than to states themselves~\cite{DBLP:conf/formats/OuaknineW08}. \MTL{} is more succinct, expressive, and versatile compared to \LTL, being able to model both synchronous (discrete) and asynchronous (real-time) settings.

First, we observe that answering arbitrary \MTL-OMQs is \coNP-complete for data complexity (in contrast to \NCo-completeness for \LTL-OMQs).  OMQs in the Horn fragment \hMTL{} are \PTime-complete and rewritable to datalog(FO), which extends datalog with FO-formulas built from EDB predicates; in fact, we establish P-hardness already for the fragment $\cMTL\!^\boxminus$ of \hMTL{} with binary rules (like in \OWLQL) and box operators only. OMQs in $\cMTL\!^\diamondminus$ turn out to be FO(TC)-rewritable (FO with transitive closure) and \NL-hard.
We then classify \MTL-OMQs by the type of ranges $\range$ constraining their temporal operators $\diamondmin_{\range}$ and $\boxminus_{\range}$: infinite   $(r,\infty)$ and $[r,\infty)$, punctual $[r,r]$, and arbitrary non-punctual $\range$. We show that OMQs of the first type are FO-rewritable and can be answered in $\ACz$. OMQs of the second type are FO(RPR)-rewritable (FO with relational primitive recursion) and $\NCo$-complete. For the third type, we obtain an \NL{} upper bound with rewritability to FO(TC)  and $\NCo$ lower bound; for \hMTL-OMQs of this type, the results are improved to L with rewritability to FO(DTC) (FO with deterministic closure).


\section{\MTL{} Ontology-Mediated Queries}\label{sec:MTL}

In the context of event monitoring, we consider a `past' variant of \MTL, which is a propositional modal logic with constrained operators $\diamondmin_\range$ `sometime in the past within range $\range$' and $\boxminus_\range$ `always in the past within range $\range$,\!' interpreted over finite timed words under the event-based semantics.
We assume that timestamps in timed words are given as non-negative dyadic rational numbers (finite binary fractions), the set of which is denoted by $\mathbb Q_2^{\ge 0}$. The ranges $\range$ in $\diamondmin_\range$ and $\boxminus_\range$ are non-empty intervals with end-points
in $\mathbb{Q}_2^{\ge 0} \cup \{\infty\}$.  
%

An \MTL-\emph{program}, $\Pi$, is a finite set of \emph{rules} of the form
\begin{equation}\label{eq:rule}
\vartheta_1 \land \dots \land \vartheta_k \to \vartheta_{k+1} \lor \dots \lor \vartheta_{k+l},
\end{equation}
where each $\vartheta_i$ takes the form $A$, $\diamondmin_{\range} A$, or $\boxminus_{\range} A$, for an atomic proposition $A$. We denote the empty $\land$ by $\top$ (truth) and empty $\lor$ by $\bot$ (falsehood). Using fresh atoms, every \MTL-formula can be transformed to an equivalent (in the sense of giving the same answers to queries) \MTL-program.

An \MTL-program is called a $\hMTL$-\emph{program} if, in all of its rules \eqref{eq:rule}, $l \le 1$ and $\vartheta_{k+1}$ is an atom. As usual, $\vartheta_{k+1}$ is called the \emph{head} of the rule and $\vartheta_1 \land \dots \land \vartheta_k$ its \emph{body}.
A $\hMTL$-program is a $\cMTL$-\emph{program} if $k+l \le 2$.
An $\MTL$- (\hMTL- or \cMTL-) \emph{ontology-mediated query} (OMQ) takes the form $\q = (\Pi, A)$, where $\Pi$ is an $\MTL$- (resp., \hMTL- or \cMTL-) program and $A$ an atom.

Intuitively, a \emph{data instance}, $\D$, can be thought of as a word $\boldsymbol{A}_0(\bar 0),\dots,\boldsymbol{A}_k(\bar k)$ with timestamps $\bar 0 < \dots < \bar k$, $\bar i \in \mathbb{Q}^{\ge 0}_2$, where each $\boldsymbol{A}_i$ is the set of atoms that are true at $\bar i$.
%
%
Formally, we represent $\D$ as the FO-structure
\begin{equation}\label{struc}
\D ~=~ (\varDelta, < , \varTheta , \mathsf{bit}_{\it in}, \mathsf{bit}_{\it fr}, A^\D_1, \dots, A^\D_p),
\end{equation}
with domain $\varDelta = \{0,\dots,\ell\}$ ordered by $<$, \emph{timestamps} $\varTheta = \{0, \dots, k\}$, $1 \le k \le \ell$, and subsets $A^\D_i \subseteq \varTheta$. The ternary predicates $\mathsf{bit}_{\it in}$ and $\mathsf{bit}_{\it fr}$ are such that, for any $n \in \varTheta$ and $i \in \varDelta$, there are unique $b_i,  c_i \in \{0,1\}$ with $\mathsf{bit}_{\it{in}}(n,i,b_i)$ and $\mathsf{bit}_{\it{fr}}(n,i,c_i)$. These predicates give the \emph{value} $\bar n \in \mathbb Q_2^{\ge 0}$ of every timestamp $n \in \varTheta$: $\bar n = b_\ell \dots b_0.c_\ell \dots c_0$ iff $\mathsf{bit}_{\it{in}}(n,i,b_i)$ and $\mathsf{bit}_{\it{fr}}(n,i,c_i)$ hold for all $i \le\ell$. We assume that $\bar n < \bar m$ if $n <m$.
For any $r \in \mathbb{Q}^{\ge 0}_2$, we can define an FO-formula
$\mathsf{dist}_{<r}(x,y)$ that holds in $\mathcal{D}$ iff $x,y \in \varTheta$ and $0 \le \bar x - \bar y < r$, its variants $\mathsf{dist}_{> r}(x,y)$, $\mathsf{dist}_{=r}(x,y)$, etc.; see Appendix~\ref{sec:dist} for details.  Using these, we can further define FO-formulas $\inr_\range(x,y)$ for $\bar x - \bar y \in \range$, $\suc(x,y)$ for `$x$ is an immediate successor of $y$ in $\D$'\!, and FO-expressible constants $\min = 0$ and $\max = k$.

An \emph{event-based} \emph{interpretation} over $\D$ is a structure
$$
\I ~=~ (\varDelta, < , \varTheta , \mathsf{bit}_{\it in}, \mathsf{bit}_{\it fr}, A^\I_1, \dots, A^\I_p), \ \ \text{$A^\D_i \subseteq A^\I_i \subseteq \varTheta$,}
$$
where the Boolean connectives are interpreted as usual and
%
\begin{align*}
(\diamondmin_\range A)^\I &~=~ \{t \in \varTheta \mid \exists t' \in \varTheta \, (\inr_\range(t,t') \land t' \in A^\I)\},\\
(\boxminus_\range A)^\I &~=~ \{t \in \varTheta \mid \forall t' \in \varTheta \, (\inr_\range(t,t') \to t' \in A^\I)\}.
\end{align*}
An interpretation $\I$ over $\D$ is a \emph{model} of an $\MTL$-program $\Pi$ and $\D$ if, for any rule~\eqref{eq:rule} in $\Pi$ and any $t \in \varTheta$, whenever $t \in \vartheta_i^\I$ for all $i$, $1 \leq i \leq k$, then $t \in \vartheta_{k+j}^\I$ for some $j$, $1 \leq j \leq l$. We call $\D$ and $\Pi$ \emph{consistent} if there is a model of $\Pi$ and $\D$.

Henceforth, we write $\tem(\D)$ for the set $\varTheta$ of timestamps in~\eqref{struc} and often informally identify $t \in \tem(\D)$ with its value $\bar t$. We call $t \in \tem(\D)$ (and so $\bar t$) a \emph{certain answer} to $\q = (\Pi,A)$ over $\D$ if $t \in A^\I$ for every model $\I$ of $\D$ and $\Pi$.
The \emph{OMQ answering problem} for $\q$ is to decide, given $\D$ and $t \in \tem(\D)$, whether $t$ is a certain answer to $\q$ over $\D$.
%
To illustrate, consider \mbox{$\Pi = \{\boxminus_{[0,2)} B \to B', \ \diamondmin_{[1,1]} B' \to A\}$},  $\D_1 = \{B(0),B(1/2), C(3/2)\}$ and $\D_2 = \{B(0),C(3/2)\}$. Then $3/2$ is a certain  answer to $(\Pi,A)$ over $\D_1$, but there are no certain answers  to  $(\Pi,A)$ over $\D_2$:\\
\centerline{
\begin{tikzpicture}[nd/.style={draw,thick,circle,inner sep=0pt,minimum size=1.5mm,fill=white},xscale=1.1,>=latex]
\draw[thick,gray,->] (-0.5,0) -- (4,0);
\slin{0,0}{$B\ \textcolor{gray}{B'}$}{$0$};
\slin{1,0}{$B\  \textcolor{gray}{B'}$}{$\frac{1}{2}$};
\slin{3,0}{$C\ \textcolor{gray}{A}$}{$\frac{3}{2}$};
\end{tikzpicture}}
We are interested in the \emph{data complexity} of OMQ answering, that is,  regard $\D$ as the only input to the problem and assume $\q$ to be fixed.

Let $\mathcal{L}$ be a query language over FO-structures~\eqref{struc}. An OMQ $\q$ is said to be $\mathcal{L}$-\emph{rewritable} if there is an $\mathcal{L}$-query $\Q(x)$, called an \emph{$\mathcal{L}$-rewriting} of $\q$, such that, for any data instance $\D$, a timestamp $t\in \tem(\D)$ is a certain answer to $\q$ over $\D$ iff $\D \models \Q(t)$.
%
%
Our target query languages $\mathcal{L}$ include:
\begin{itemize}\itemsep=0pt
\item[--] FO$(<)$ and its extension FO$(<,+)$ with the predicate PLUS (e.g., $\exists x \,  \text{PLUS}(x, x, \max)$ says that $|\varTheta|$ is odd); evaluating such queries is in \ACz{} for data complexity;

\item[--] FO(RPR), i.e., FO$(<)$ with relational primitive recursion, which is in \NCo~\cite{DBLP:journals/iandc/ComptonL90};

\item[--] FO(TC) and FO(DTC), i.e., FO$(<)$ with transitive and deterministic transitive closure, which are in NL and L, respectively~\cite{Immerman99};

\item[--] datalog(FO), i.e., datalog queries with additional FO-formulas built from EDB predicates in their rule bodies, which are in \PTime~\cite{DBLP:conf/csl/Gradel91}.
\end{itemize}
All of them save datalog(FO) can be implemented in SQL.
$\mathcal{L}$-rewritability of an OMQ $\q$ means that answering $\q$ is in the same data-complexity class as evaluation of $\mathcal{L}$-queries.

%

Given a $\hMTL$-program $\Pi$ and a data instance $\D$, we define a set $\mathfrak C_{\Pi,\D}$ of pairs of the form $(\vartheta,t)$ that contains all answers to OMQs with $\Pi$ over $\D$.
We start by setting $\C = \D$ and denote by $\cl(\C)$ the result of applying exhaustively and non-recursively the following rules
to $\C$:
\begin{itemize}\itemsep=0pt
\item[--] if $\vartheta_1\land \dots \land \vartheta_k \to \vartheta$ is in $\Pi$ and $(\vartheta_i,t) \in \C$, for all $i$, $1 \leq i \leq k$, then we add $(\vartheta,t)$ to $\C$;

\item[--] if $\diamondmin_{\range}B$ occurs in $\Pi$,  $(B,t') \in \C$, and $\inr_\range(t,t')$ holds for some  $t \in \tem(\D)$, then we add $(\diamondmin_{\range} B,t)$ to $\C$;

\item[--] if $\boxminus_{\range}B$ occurs in $\Pi$, $t \in \tem(\D)$ and $(B,t') \in \C$ for all $t' \in \tem(\D)$ with $\inr_\range(t,t')$, then we add $(\boxminus_{\range} B,t)$ to $\C$.
%
\end{itemize}
It should be clear that there is some $N < \omega$ polynomially depending on $\Pi$ and $\D$ such that $\cl^N(\C) = \cl^{N+1}(\C)$. We then set $\mathfrak C_{\Pi,\D} = \cl^N(\D)$. We can regard $\mathfrak C_{\Pi,\D}$ as a (minimal) model of $\Pi$ and $\D$ with domain $\tem(\D)$ in which $t \in B^{\mathfrak C_{\Pi,\D}}$ iff $(B,t) \in \mathfrak C_{\Pi,\D}$ The proof of the following is standard:

\begin{theorem}\label{thm:canon} For a \hMTL-OMQ $(\Pi,A)$,
$(i)$ $\Pi$ is inconsistent with $\D$ iff $(\bot,t) \in \mathfrak C_{\Pi,\D}$; $(ii)$ a timestamp $t \in \tem(\D)$ is a certain answer to a \hMTL-OMQ $(\Pi,A)$ over $\D$ iff either $\mathfrak C_{\Pi,\D} \models A[t]$ or $\Pi$ is inconsistent with $\D$.
\end{theorem}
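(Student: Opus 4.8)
The plan is to prove the two claims in tandem by establishing that $\mathfrak C_{\Pi,\D}$, viewed as an interpretation with domain $\tem(\D)$, is the \emph{least} model of $\Pi$ and $\D$ whenever $\Pi$ is consistent with $\D$, and that $(\bot,t)$ enters $\mathfrak C_{\Pi,\D}$ precisely when no model exists. First I would check that $\mathfrak C_{\Pi,\D}$, read off as $t \in B^{\mathfrak C_{\Pi,\D}}$ iff $(B,t) \in \mathfrak C_{\Pi,\D}$, is indeed a model of $\Pi$ and $\D$ provided $(\bot,t) \notin \mathfrak C_{\Pi,\D}$ for all $t$: this is where the fixpoint property $\cl^N(\C) = \cl^{N+1}(\C)$ is used. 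Concretely, for a rule $\vartheta_1 \land \dots \land \vartheta_k \to \vartheta_{k+1}$ of $\hMTL$ and a timestamp $t$ with $t \in \vartheta_i^{\mathfrak C_{\Pi,\D}}$ for all $i \le k$, one needs to know that each $(\vartheta_i, t)$ is in $\mathfrak C_{\Pi,\D}$; for $\vartheta_i$ an atom this is immediate, and for $\vartheta_i$ of the form $\diamondmin_\range B$ or $\boxminus_\range B$ one argues that the relevant $\diamondmin$-/$\boxminus$-closure rule would have added $(\vartheta_i,t)$, using that $\mathfrak C_{\Pi,\D}$ is closed under $\cl$. Then the rule-closure step guarantees $(\vartheta_{k+1},t) \in \mathfrak C_{\Pi,\D}$, i.e. $t \in \vartheta_{k+1}^{\mathfrak C_{\Pi,\D}}$; and $A^\D_i \subseteq A^{\mathfrak C_{\Pi,\D}}_i$ holds because we started from $\C = \D$. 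A small subtlety is that $\mathfrak C_{\Pi,\D}$ has domain $\tem(\D)$ rather than $\varDelta$, so strictly I would either restrict attention to the timestamp part of $\D$ or note that the extra domain elements of $\varDelta \setminus \varTheta$ are irrelevant since all atoms and the semantics of $\diamondmin_\range,\boxminus_\range$ live on $\varTheta$.

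Next I would prove minimality: by induction on the stage $j$ at which a pair $(\vartheta,t)$ is added to $\cl^j(\D)$, every model $\I$ of $\Pi$ and $\D$ satisfies $t \in \vartheta^\I$. The base case $j=0$ is $\D \subseteq \I$. For the inductive step, each of the three closure rules mirrors exactly a semantic constraint: the rule-closure step matches the definition of "$\I$ is a model of rule~\eqref{eq:rule}"; the $\diamondmin$-step matches $(\diamondmin_\range A)^\I = \{t \mid \exists t'\,(\inr_\range(t,t') \land t' \in A^\I)\}$; and the $\boxminus$-step matches the dual, crucially using that $\mathfrak C_{\Pi,\D}$ and $\I$ share the domain $\tem(\D)$ so that the universal quantifier over $t' \in \tem(\D)$ with $\inr_\range(t,t')$ in the closure rule coincides with the one in the semantics. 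This gives $\mathfrak C_{\Pi,\D} \subseteq \I$ for every model $\I$ when we also know $(\bot,t) \notin \mathfrak C_{\Pi,\D}$; in particular if $(\bot,t) \in \mathfrak C_{\Pi,\D}$ then every model would need $t \in \bot^\I = \emptyset$, a contradiction, which proves the $\Leftarrow$ direction of $(i)$. Conversely, if $(\bot,t) \notin \mathfrak C_{\Pi,\D}$ for all $t$, the first paragraph shows $\mathfrak C_{\Pi,\D}$ is itself a model, so $\Pi$ is consistent with $\D$, giving the $\Rightarrow$ direction of $(i)$.

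Part $(ii)$ then follows routinely. If $\Pi$ is inconsistent with $\D$, every $t \in \tem(\D)$ is vacuously a certain answer. If $\Pi$ is consistent, then by $(i)$ no $(\bot,t)$ is in $\mathfrak C_{\Pi,\D}$, so $\mathfrak C_{\Pi,\D}$ is a model of $\Pi$ and $\D$; hence if $t$ is a certain answer then $t \in A^{\mathfrak C_{\Pi,\D}}$, i.e. $\mathfrak C_{\Pi,\D} \models A[t]$; conversely if $\mathfrak C_{\Pi,\D} \models A[t]$ then by minimality $t \in A^\I$ in \emph{every} model $\I$, so $t$ is a certain answer.

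I expect the only genuinely delicate point — and the one I would spell out most carefully — to be the verification in the first paragraph that $\mathfrak C_{\Pi,\D}$ satisfies the rules of $\Pi$: it requires the observation that membership of the modal subformulas $\diamondmin_\range B$, $\boxminus_\range B$ in $\mathfrak C_{\Pi,\D}$ is \emph{equivalent} to the corresponding semantic condition holding in $\mathfrak C_{\Pi,\D}$ (not just implied by it), which in turn relies on $\mathfrak C_{\Pi,\D}$ being a genuine fixpoint of $\cl$. Everything else is the standard least-fixpoint / least-model argument for Horn theories, and I would describe it as such rather than belabour it.
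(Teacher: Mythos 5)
Your proposal is correct and is exactly the standard least-fixpoint/least-model argument for Horn theories that the paper has in mind — the paper itself gives no proof, merely remarking that it is standard. Your identification of the one delicate point (that membership of $\diamondmin_\range B$ and $\boxminus_\range B$ in $\mathfrak C_{\Pi,\D}$ must be \emph{equivalent} to the semantic condition, via the fixpoint property) is apt and correctly handled.
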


Note in passing that, as a consequence, we obtain the following reduction of $\mathcal{L}$-rewritability of more general \hMTL-OMQs $(\Pi,\varphi)$ with \emph{positive FO-queries} $\varphi$ (built from atoms, $\land$, $\lor$, $\forall$, and $\exists$) to $\mathcal{L}$-rewritability of atomic OMQs we deal with in this paper:
\begin{corollary}
Let $(\Pi,\varphi)$ be a \hMTL-OMQ with a positive FO-query $\varphi$. If $(\Pi,A)$ has an $\mathcal{L}$-rewriting $\Q_A(x)$, for every atom $A$, then $\Q_\varphi = \varphi[A_1/\Q_{A_1},\dots,A_n/\Q_{A_n}] \lor \exists x\, \Q_B(x)$ is an $\mathcal{L}$-rewriting of $(\Pi,\varphi)$, where $B$ is an atom not occurring in $\Pi$ and any data instance and $\varphi[A_1/\Q_{A_1},\dots,A_n/\Q_{A_n}]$ is the result of replacing every atom of the form $A_i(x)$ in $\varphi$ with $\Q_{A_i}(x)$.
\end{corollary}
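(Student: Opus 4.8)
The plan is to route everything through Theorem~\ref{thm:canon} and to isolate two facts: that the disjunct $\exists x\,\Q_B(x)$ is an $\mathcal{L}$-rewriting of the property ``$\Pi$ is inconsistent with $\D$'', and that, on data instances consistent with $\Pi$, the substituted formula $\varphi[A_1/\Q_{A_1},\dots,A_n/\Q_{A_n}]$ computes the truth value of $\varphi$ in the canonical model $\mathfrak{C}_{\Pi,\D}$.

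First I would treat the inconsistency disjunct. Since $B$ occurs neither in $\Pi$ nor in any data instance, no rule of $\Pi$ has $B$ in its head, so $(B,t)\notin\mathfrak{C}_{\Pi,\D}$ for every $t$ and every $\D$. By Theorem~\ref{thm:canon}(ii) this makes a timestamp $t$ a certain answer to $(\Pi,B)$ over $\D$ exactly when $\Pi$ is inconsistent with $\D$. Assuming, without loss of generality (by conjoining with the FO-definable unary predicate $\varTheta$), that every rewriting $\Q_A$ holds only at timestamps, and using that $\tem(\D)\neq\emptyset$, I conclude that $\D\models\exists x\,\Q_B(x)$ iff $\Pi$ is inconsistent with $\D$.

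Next I would split on consistency. If $\Pi$ is inconsistent with $\D$, then every $t$ is a certain answer to $(\Pi,\varphi)$ vacuously, and also $\D\models\exists x\,\Q_B(x)$, hence $\D\models\Q_\varphi$, so both sides of the rewriting equivalence hold. If $\Pi$ is consistent with $\D$, then $\D\not\models\exists x\,\Q_B(x)$, so $\D\models\Q_\varphi$ iff $\D\models\varphi[A_1/\Q_{A_1},\dots]$. I would then run a routine induction on the structure of $\varphi$: an atom of the form $A_i(x)$ satisfies $\D\models\Q_{A_i}(t)$ iff $t$ is a certain answer to $(\Pi,A_i)$ iff, by Theorem~\ref{thm:canon}(ii) in the consistent case, $\mathfrak{C}_{\Pi,\D}\models A_i[t]$; any other atom (over $<$ or the bit predicates) is left unchanged by the substitution and evaluated identically in $\D$ and $\mathfrak{C}_{\Pi,\D}$, which share those relations; and $\land,\lor,\exists,\forall$ carry the equivalence through. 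The upshot is $\D\models\varphi[A_1/\Q_{A_1},\dots]$ iff $\mathfrak{C}_{\Pi,\D}\models\varphi$. Finally I would invoke that $\mathfrak{C}_{\Pi,\D}$ is itself a model of $\Pi$ and $\D$ and is, on every $A_i$, contained in every model of $\Pi$ and $\D$ (both facts read off from Theorem~\ref{thm:canon}(ii)); since $\varphi$ is positive and hence preserved under enlarging the interpretations of the $A_i$, this gives $\mathfrak{C}_{\Pi,\D}\models\varphi$ iff $\varphi$ holds in all models of $\Pi$ and $\D$, i.e.\ iff $t$ is a certain answer to $(\Pi,\varphi)$. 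Combining the two cases shows $\Q_\varphi$ is a rewriting, and $\Q_\varphi\in\mathcal{L}$ because each target language extends FO and is closed under the positive connectives used to assemble it from the $\Q_{A_i}$ and $\Q_B$.

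I expect the main obstacle to be the final monotonicity step, which transfers satisfaction of $\varphi$ from $\mathfrak{C}_{\Pi,\D}$ to all models of $\Pi$ and $\D$ and relies on both positivity of $\varphi$ and the leastness of $\mathfrak{C}_{\Pi,\D}$; a subsidiary point requiring care is the domain mismatch, namely that $\mathfrak{C}_{\Pi,\D}$ must be taken with the same domain $\varDelta$ as $\D$ --- equivalently, all quantifiers of $\varphi$ and the witness of $\exists x\,\Q_B(x)$ relativised to $\varTheta$ --- so that $\mathfrak{C}_{\Pi,\D}$ is literally a substructure of every model of $\Pi$ and $\D$.
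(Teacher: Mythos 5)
Your proposal is correct and follows essentially the same route as the paper: isolate the inconsistency disjunct, show by induction that $\varphi[A_1/\Q_{A_1},\dots,A_n/\Q_{A_n}]$ evaluates $\varphi$ in $\mathfrak C_{\Pi,\D}$, and transfer satisfaction from $\mathfrak C_{\Pi,\D}$ to all models using positivity and minimality. The only cosmetic difference is that you phrase the last step as monotonicity under enlarging the $A_i$ rather than preservation under homomorphic images, but since the relevant homomorphism is the identity on the shared domain these coincide; your extra care about relativising to $\varTheta$ and about why $\exists x\,\Q_B(x)$ captures inconsistency just fills in details the paper leaves implicit.
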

\begin{proof}
Observe first that, since $\varphi(\vec{x})$ is positive, we have, for any consistent $\Pi$ and $\D$ and any $\vec{a} \subseteq \tem(\D)$, that $\mathfrak C_{\Pi,\D} \models \varphi(\vec{a})$ iff $\I \models \varphi(\vec{a})$ for all models $\I$ of $\Pi$ and $\D$. Indeed, to show $(\Rightarrow)$, we use the fact that there is a homomorphism from $\mathfrak C_{\Pi,\D}$ onto $\I$ (as $\mathfrak C_{\Pi,\D}$ is a minimal model of $\Pi$ and $\D$), and positive formulas are preserved under homomorphic images~\cite{Chang&Keisler}.
On the other hand, one can show by induction on the construction of $\varphi$ that $\mathfrak C_{\Pi,\D} \models \varphi(\vec{a})$ iff $\D \models \varphi[A_1/\Q_{A_1},\dots,A_n/\Q_{A_n}](\vec{a})$. It remains to observe that $\Pi$ and $\D$ are inconsistent iff $\D \models \exists x\, \Q_B(x)$.
\end{proof}
%



\section{OMQs with Arbitrary Ranges}\label{sec:arb}

We begin by establishing (non-)rewritability and data complexity of answering OMQs in various classes where  \emph{arbitrary} ranges in temporal operators are allowed. We denote by $\cMTL\!^\boxminus$  ($\cMTL\!^\diamondminus$) the restriction of $\cMTL$ to the language with operators $\boxminus_\range$ (respectively, $\diamondmin_\range$) only.

\begin{theorem}\label{start-complexity}
$(i)$ Answering \MTL-OMQs is \coNP-complete for data complexity;
$(ii)$ \hMTL-OMQs are \emph{datalog(FO)}-rewritable, with $\cMTL\!^\boxminus$-OMQs being \PTime-hard;
$(iii)$ $\cMTL\!^\diamondminus$-OMQs are \emph{FO(TC)}-rewritable and \NL-hard.
\end{theorem}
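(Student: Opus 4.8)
The plan is to prove the three parts separately, in each case splitting into an upper bound (rewritability or membership) and a matching lower bound.

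\medskip

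\textbf{Part (i): \coNP-completeness.} For the upper bound, note that a timestamp $t$ is \emph{not} a certain answer to $(\Pi,A)$ over a consistent $\D$ iff there is a model $\I$ of $\Pi$ and $\D$ with $t\notin A^\I$. Since every temporal operator in $\Pi$ has a fixed range, the truth of $\vartheta_i^\I$ at a point $t$ depends only on membership of a polynomially-bounded (in fact, bounded by $|\Pi|$ via the endpoints of the ranges, intersected with $\tem(\D)$) set of timestamps in the unary predicates; hence a model can be guessed by choosing, for each atom of $\Pi$ and each element of $\tem(\D)$, whether it holds, which is a polynomial-size certificate that can be checked in polynomial time against all rules and all timestamps. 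Inconsistency of $\Pi$ and $\D$ is decided the same way (guess a model; if none exists, $\Pi,\D$ inconsistent, and then \emph{every} $t$ is a certain answer, so non-answer detection still amounts to guessing a model with $t\notin A^\I$). This gives \coNP. For \coNP-hardness, I would reduce from (the complement of) a standard \NP-complete problem such as $3$-colourability or $3$-SAT, encoding the propositional structure into a single data instance over a fixed program: the disjunctions in \MTL-rules of the form $\top\to B_1\lor B_2\lor B_3$ provide the nondeterministic ``colour choice'', and box/diamond operators with fixed small ranges let a fixed $\Pi$ read off the adjacency/clause structure that is written into the timed word. The main design point is that $\Pi$ must be fixed while $\D$ varies, so the instance-specific structure (edges, clauses) has to be encoded purely in the positions and labels of the word, with $\Pi$ only ``scanning'' neighbours at fixed offsets; this is where some care is needed, but it is a routine adaptation of known \LTL/temporal-datalog hardness constructions.

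\medskip

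\textbf{Part (ii): datalog(FO)-rewritability and \PTime-hardness.} By Theorem~\ref{thm:canon}, answering a \hMTL-OMQ reduces to computing the canonical structure $\mathfrak C_{\Pi,\D}$ and checking $\mathfrak C_{\Pi,\D}\models A[t]$ (or inconsistency). The set $\mathfrak C_{\Pi,\D}$ is defined by an exhaustive closure under three rules, and each of these rules is readily expressible in datalog(FO): introduce an IDB predicate $P_\vartheta(x)$ for every subformula $\vartheta$ occurring in $\Pi$; the rule-application step becomes an ordinary datalog rule $P_{\vartheta_1}(x)\land\dots\land P_{\vartheta_k}(x)\to P_\vartheta(x)$; the $\diamondmin_\range$ step becomes $P_B(y)\land \inr_\range(x,y)\to P_{\diamondmin_\range B}(x)$, using the FO-formula $\inr_\range$ in the body; and the $\boxminus_\range$ step becomes $\tem(x)\land \bigl(\forall y\,(\inr_\range(x,y)\to P_B(y))\bigr)\to P_{\boxminus_\range B}(x)$ — the universally quantified guard is an FO-formula over the EDB predicates $\mathsf{bit}_{\it in},\mathsf{bit}_{\it fr},<$ together with the single IDB $P_B$; here I would appeal to the fact (or observe directly) that stratified use of such ``FO guards over lower strata'' is still within datalog(FO), or alternatively rewrite $\boxminus_\range B$ as the complement of $\diamondmin_\range\neg B$ only after noting monotonicity is preserved because $\mathfrak C_{\Pi,\D}$ grows monotonically — the cleaner route is the stratified one. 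Inconsistency is the derivation of $(\bot,t)$ for some $t$, i.e.\ an IDB predicate $P_\bot$; the rewriting of $(\Pi,A)$ is then $P_A(x)\lor\exists y\,P_\bot(y)$. Polynomial-time termination of the closure was already argued in the text, so datalog(FO) evaluation suffices. For \PTime-hardness, already for $\cMTL\!^\boxminus$ (binary Horn rules, box operators only), I would reduce from a standard P-complete problem — monotone circuit value or path systems / HORN-SAT. The Horn propositional structure maps to the rule-application closure; the role of the temporal operators is to ``move'' a derived truth value from one designated timestamp to another, so that a fixed $\cMTL\!^\boxminus$ program can simulate an arbitrary monotone circuit whose gate-wiring is encoded into the timed word (a gate at position $p$ with inputs at positions $p_1,p_2$ placed at the fixed offsets dictated by the box ranges). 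The obstacle here is the same fixed-$\Pi$/varying-$\D$ tension as in (i), compounded by having \emph{only} $\boxminus_\range$ and \emph{only} binary rules; one needs a gadget layout in which every needed ``edge'' uses one of finitely many distances, which is arranged by padding the word so that inputs of a gate always sit at, say, distance $1$ or $2$ from the gate.

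\medskip

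\textbf{Part (iii): FO(TC)-rewritability and \NL-hardness of $\cMTL\!^\diamondminus$.} Because there are no box operators and rules are binary, the closure rules for $\mathfrak C_{\Pi,\D}$ become purely ``reachability-like'': deriving $(\vartheta,t)$ from a chain of binary rule applications $P_{\vartheta'}(t)\to P_{\vartheta}(t)$ and diamond steps $P_B(t')\to P_{\diamondmin_\range B}(t)$ whenever $\inr_\range(t,t')$ — each step has at most one ``cause'', so the whole derivation of a fact $(A,t)$ is a directed path in an FO-definable graph on the (polynomial-size) set of pairs $(\vartheta,t)$ with $\vartheta$ a subformula of $\Pi$ and $t\in\tem(\D)$. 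Concretely, I would define an FO$(<)$ formula $E\bigl((\vartheta',t'),(\vartheta,t)\bigr)$ that holds exactly when one closure step takes $(\vartheta',t')$ to $(\vartheta,t)$ — using $\inr_\range$ for the diamond steps and the (finitely many) binary rules of $\Pi$ for the rule steps — and then the rewriting of $(\Pi,A)$ is: ``there is a $\mathrm{TC}$-path in $E$ from some pair $(B,t_0)$ with $B(t_0)\in\D$ to $(A,x)$.'' Encoding the finite index $\vartheta$ into a fixed-width tuple of variables is harmless since $\Pi$ is fixed. One must also handle consistency, but with no boxes and only heads that are atoms (never $\bot$ unless $\bot$ explicitly occurs as a head — and $\cMTL$ heads are atoms, so $\bot$-heads are ruled out, making $\Pi$ always consistent with every $\D$); if the definitions do allow $\bot$-heads one adds the disjunct ``$\mathrm{TC}$-reach some $(\bot,t)$''. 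This yields FO(TC)-rewritability and hence the \NL\ upper bound. For \NL-hardness I would reduce from directed $s$–$t$ reachability: encode the vertices of the input digraph as timestamps, its edges as the (fixed-distance, after padding) offsets licensed by the diamond ranges of a fixed $\cMTL\!^\diamondminus$ program of the shape $\diamondmin_\range B\to B$, so that the single atom $B$ propagates backwards exactly along edges, seed $B$ at $s$, and ask whether $t$ is a certain answer. The main obstacle, once more, is the fixed-program constraint: an arbitrary digraph has edges of arbitrary ``distance'', so the reduction must re-lay the graph on the timeline so that every edge becomes one of the finitely many distances hard-wired into $\Pi$ — achievable by routing each edge through a bounded number of intermediate ``relay'' timestamps, each hop of a canonical small length.
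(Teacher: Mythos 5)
Your overall decomposition matches the paper's, and your part~$(iii)$ upper bound (transitive closure of an FO-definable single-cause step relation on pairs $(\vartheta,t)$, with a separate existential disjunct handling the two-body-atom $\bot$-rules) is essentially the argument the paper gives via linear datalog(FO). However, there are three concrete gaps. First, for the \coNP{} lower bound you defer to ``a routine adaptation of known \LTL{}/temporal-datalog hardness constructions'' --- but \LTL-OMQ answering is \NCo-complete, so there is no such construction to adapt; the metric ranges and dyadic timestamps are precisely what buy the extra power. The paper's reduction (from circuit satisfiability) packs all $N_0$ gates into each unit-length block by placing gate $i$ at offset $i/N$, uses the punctual rules $\diamondmin_{[2,2]}T\to T$ and $\diamondmin_{[2,2]}F\to F$ to copy the entire, unboundedly wide, vector of guessed gate values from one block to the next, and evaluates one gate per block by reading its inputs through $\diamondmin_{[0,1]}$ within that block. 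Identifying this copying/packing mechanism is the real content of the hardness proof, and the same device underlies the \PTime- and \NL-hardness constructions in $(ii)$ and $(iii)$; your ``relay'' idea for $(iii)$ is plausible but would still need an argument that relays for distinct edges do not interfere.

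Second, your datalog(FO) rewriting of $\boxminus_\range B$ uses a universally quantified guard over the IDB predicate $P_B$; datalog(FO) as defined here admits additional FO subformulas over \emph{EDB} predicates only, and neither ``stratified use'' nor rewriting $\boxminus_\range B$ as the complement of $\diamondmin_\range\neg B$ stays inside that fragment. The paper's missing idea is a binary IDB $B'(x,y)$ meaning ``$B$ is derived at every timestamp in $[x,y]$'', grown by the merging rule $B'(x,y)\land B'(z,z)\land\suc(y,z)\to B'(x,z)$, so that $\boxminus_{[r,s]}B$ at $x$ becomes the purely existential condition that some derived interval covers $[\bar x-s,\bar x-r]$. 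Third, for \PTime-hardness of $\cMTL\!^\boxminus$ note that core rules have at most one body atom ($k+l\le 2$), so the binary AND of a monotone gate or hyperedge cannot come from conjunction in rule bodies as your sketch assumes; in the paper it comes from the universal force of a single $\boxminus_{(0,1]}$ whose window is laid out in the data so as to contain exactly the timestamps carrying the two premises. Without that windowing idea the reduction does not go through.
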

\begin{proof}
$(i)$ The membership in \coNP{} is trivial. We establish \coNP-hardness by reduction of \NP-complete circuit satisfiability~\cite{Arora&Barak09}.
Let $\boldsymbol{C}$ be a Boolean circuit with $N_0$-many (two-input) AND, OR and (one-input) NOT gates enumerated by consecutive numbers starting from 0 so that if there is an edge from $n$ to $m$, then $n < m$. Take the minimal $N = 2^k \ge N_0$ and a data instance $\D_{\boldsymbol{C}}$ \mbox{with the facts}
\begin{itemize}\itemsep=0pt
\item[--] $A(2n+ i/N)$, if $n$ is a gate and $0\leq i < N_0$;
\item[--] $X(2n+ n/N)$, if $n$ is an input  gate;
\item[--] $N(2n+n/N)$, if $n$ is a NOT gate;
\item[--] $D(2n+n/N)$, if $n$ is an OR gate;
\item[--] $C(2n+n/N)$, if $n$ is an AND gate;
\item[--] $I_0(2n + m/N)$, if $n$ is a NOT gate with input gate $m$;
\item[--] $I_1(2n + m/N)$ and $I_2(2n + k/N)$, if $n$ is an OR or AND gate with input gates $m$ and $k$.
\end{itemize}
Let $\Pi_{\boldsymbol{C}}$ be an $\MTL$-program with the following rules:
\begin{align*}
& X \to T \lor F,
\quad  \diamondmin_{[2,2]} T  \to T, \quad  \diamondmin_{[2,2]} F  \to F ,\\
&N \land \diamondmin_{[0,1]}(I_0 \land T)   \to   F, \quad  N \land \diamondmin_{[0,1]}(I_0 \land F)  \to T,\\
&D \land \diamondmin_{[0,1]}(I_1 \land T)   \to   T, \quad D \land \diamondmin_{[0,1]}(I_2 \land T)  \to T,\\
& C \land \diamondmin_{[0,1]}(I_1 \land F)   \to   F, \quad C \land \diamondmin_{[0,1]}(I_2 \land F)  \to F,
\\
& D \land \diamondmin_{[0,1]}(I_1 \land F) \land \diamondmin_{[0,1]}(I_2 \land F) \to
F,\\
& C \land \diamondmin_{[0,1]}(I_1 \land T) \land \diamondmin_{[0,1]}(I_2 \land T) \to
T.
\end{align*}
Then $\mathbf{C}$ is satisfiable iff the maximal number in $\tem(\D)$  is not a certain answer to $(\Pi_{\boldsymbol{C}},F)$ over $\D_{\boldsymbol{C}}$. An example of $\boldsymbol{C}$ and an initial part of a model of $\Pi_{\boldsymbol{C}}$, $\D_{\boldsymbol{C}}$ is shown below:
\begin{center}
\begin{tikzpicture}
\begin{scope}[scale = 1.3, xshift = 160, circuit logic US, every circuit symbol/.style={thick},thick]

\node[or gate,inputs={nn}, point right,fill=gray!20,label={[yshift = -.05cm]above left:{\scriptsize \smash{2}}}] (n2) at (0,1) {$\lor$};

\node[and gate,inputs={nn}, point right,fill=gray!20,label={[xshift=.25cm, yshift = -.05cm] above left:{\scriptsize \smash{3}}}] (n3) at (1.3,0.7) {$\land$};

\node[not gate,inputs={nn}, point right,fill=gray!20,label={[xshift=.25cm, yshift = .04cm] above left:{\scriptsize \smash{4}}}] (n4) at (2.3,0.7) {$\neg$};

\node[rectangle, draw, label=160:{\scriptsize \!1}] (x1) at (-1.23,0.5) {\footnotesize $X$};
\node[rectangle, draw, label=160:{\scriptsize \!0}] (x0) at (-1.23,1.1) {\footnotesize $X$};
\draw (n2.output) -- +(0.2,0) |- (n3.input 1);
\draw (n3.output) -- (n4.input);
\draw (n4.output) -- +(0.2,0);

\draw (n2.input 1) -- (x0);
\draw (n2.input 2) -| ($(x1)+(0.5,0)$);
\draw (x1) -- ($(x1)+(0.5,0)$);

\draw ($(n3.input 2)+(0,-.1)$) -- ($(x1)+(0.5,0)$);

\end{scope}

\begin{scope}[scale = 1.15, yshift = -12, xshift = 100, circuit logic US, every circuit symbol/.style={thick},thick]
\draw[thick,-, gray](0.1,0)--(7.2,0);
	
		
\foreach \x in {0,2,4,6}
{
\foreach \y in {1,...,5}
{
\draw[-, thick, black] (  $(\x + \y/4.5,0)+(0,0.05)$) -- ($(\x+ \y/4.5,0)+(0,-0.05)$ );
\node(\x and\y) at ($(\x+ \y/4.5,0)+(0,-0.3)$) { } ;
\node  at ($(\x+ \y/4.5,0)+(0,0.2)$) {\scriptsize$A$} ;
}
}

%

\node at ($(0and1)+(0,-0.05)$) {\scriptsize $\frac{0}{8}$} ;
\node at ($(0and2)+(0,-0.05)$) {\scriptsize $\frac{1}{8}$} ;
\node at ($(0and3)+(0,-0.05)$) {\scriptsize $\frac{2}{8}$} ;
\node at ($(0and4)+(0,-0.05)$) {\scriptsize $\frac{3}{8}$} ;
\node at ($(0and5)+(0,-0.05)$) {\scriptsize $\frac{4}{8}$} ;

\node at ($(2and1)+(0,-0.05)$) {\scriptsize $\frac{16}{8}$} ;
\node at ($(2and2)+(0,-0.05)$) {\scriptsize $\frac{17}{8}$} ;
\node at ($(2and3)+(0,-0.05)$) {\scriptsize $\frac{18}{8}$} ;
\node at ($(2and4)+(0,-0.05)$) {\scriptsize $\frac{19}{8}$} ;
\node at ($(2and5)+(0,-0.05)$) {\scriptsize $\frac{20}{8}$} ;

\node at ($(4and1)+(0,-0.05)$) {\scriptsize $\frac{32}{8}$} ;
\node at ($(4and2)+(0,-0.05)$) {\scriptsize $\frac{33}{8}$} ;
\node at ($(4and3)+(0,-0.05)$) {\scriptsize $\frac{34}{8}$} ;
\node at ($(4and4)+(0,-0.05)$) {\scriptsize $\frac{35}{8}$} ;
\node at ($(4and5)+(0,-0.05)$) {\scriptsize $\frac{36}{8}$} ;

\node at ($(6and1)+(0,-0.05)$) {\scriptsize $\frac{48}{8}$} ;
\node at ($(6and2)+(0,-0.05)$) {\scriptsize $\frac{49}{8}$} ;
\node at ($(6and3)+(0,-0.05)$) {\scriptsize $\frac{50}{8}$} ;
\node at ($(6and4)+(0,-0.05)$) {\scriptsize $\frac{51}{8}$} ;
\node at ($(6and5)+(0,-0.05)$) {\scriptsize $\frac{52}{8}$} ;

\node at  ($(0and1)+(0,0.7)$) {\scriptsize$X$} ;		
\node at  ($(2and2)+(0,0.7)$) {\scriptsize$X$} ;	
\node at  ($(4and1)+(0,0.7)$) {\scriptsize$I_1$} ;		
\node at  ($(4and2)+(0,0.7)$) {\scriptsize$I_2$} ;	
\node at  ($(4and3)+(0,0.7)$) {\scriptsize$D$} ;	
\node at  ($(6and3)+(0,0.7)$) {\scriptsize$I_1$} ;		
\node at  ($(6and2)+(0,0.7)$) {\scriptsize$I_2$} ;	
\node at  ($(6and4)+(0,0.7)$) {\scriptsize$C$} ;	

\node[gray] at  ($(0and1)+(0,0.9)$) {\scriptsize$T$} ;	
\node[gray] at  ($(2and1)+(0,0.9)$) {\scriptsize$T$} ;	
\node[gray] at  ($(2and2)+(0,0.9)$) {\scriptsize$F$} ;	
\node[gray] at  ($(4and1)+(0,0.9)$) {\scriptsize$T$} ;	
\node[gray] at  ($(4and2)+(0,0.9)$) {\scriptsize$F$} ;	
\node[gray] at  ($(4and3)+(0,0.9)$) {\scriptsize$T$} ;				
\node[gray] at  ($(6and1)+(0,0.9)$) {\scriptsize$T$} ;	
\node[gray] at  ($(6and2)+(0,0.9)$) {\scriptsize$F$} ;	
\node[gray] at  ($(6and3)+(0,0.9)$) {\scriptsize$T$} ;	
\node[gray] at  ($(6and4)+(0,0.9)$) {\scriptsize$F$} ;



\end{scope}
\end{tikzpicture}
\end{center}

$(ii)$ We construct a datalog(FO) rewriting $(\Pi',G(x))$ of a \hMTL-OMQ $\q = (\Pi,A)$. To begin with, we add to $\Pi$ the rule $P(x) \to P'(x,x)$ for each $P$ in $\Pi$. The other rules in $\Pi'$ are obtained from the rules in $\Pi$ by the following transformations. We replace every atom $B$ not under the scope of a temporal operator with $B'(x,x)$ and  every $\diamondmin_{[r,s]} B$ with
$$
B'(w,z) \land \mathsf{dist}_{\geq r}(x,w) \land \mathsf{dist}_{\leq s}(x,z)
$$
and similarly for other types of ranges $\range$ in $\diamondmin_\range B$. Intuitively, $\Pi', \D \models B'(x,y)$ iff $(B,t) \in \mathfrak C_{\Pi,\D}$, for each $t \in [x,y]$ from $\tem(\D)$. We replace every $\boxminus_{[r,s]} B$ in the body of a rule with
$$
  B'(w,z) \land \mathsf{dist}_{\geq s}(x,w) \land \mathsf{dist}_{\leq r}(x,z) \land
   \mathsf{dist}_{\ge (s-r)}(z,w)
$$
and similarly for other types of ranges.
Finally, we add the following rules to the resulting program:
\begin{align*}
& A'(y,z) \land (y \le x \le z) \to G(x),\\
& B'(x,y) \land B'(z,z) \land \mathsf{suc}(y,z) \to B'(x,z).
\end{align*}
Note that the obtained datalog program $\Pi'$ contains FO-definable EDB predicates such as $\mathsf{dist}_{\geq r}(x,w)$ and $\mathsf{suc}(y,z)$ in rule bodies. Clearly, $t$ is a certain answer to $\q$ over any given data instance $\D$ iff $t$ is an answer to $(\Pi', G(x))$ over $\D$.

We show \PTime{} hardness of $\cMTL\!^\boxminus$-OMQs by reduction of path system accessibility (PSA). Let  $G$  be a hypergraph
with $N_0$ vertices enumerated by consecutive natural numbers starting from 0 so that if $(m,n,o)$ is a hyperedge, then $m<n<o$.
Let $e_0,  \dots , e_{k-1}$ be  the lexicographical order of hyperedges.
Suppose the problem is to check whether a vertex $t$ is accessible from a set of vertices $S$,  i.e., whether $t \in S$ or there are vertices  $u,w$ accessible from $S$ and $(u,w,t)$ is a hyperedge. Let $\D_G$ comprise the atoms $A(4i +  n/N )$, for $0 \leq i \leq k$ and a vertex $n$, together with
\begin{itemize}
\item[--] $A(2+  4i +  m/N )$, $A(2+  4i + n/N )$, $A(2+  4i +  o/N )$, and $A(2+  4i + n/N  -1)$,
 for a hyperedge $e_i=(m,n,o)$;

\item[--] $R(4i + n/N )$, for $0 \leq i \leq m$ and  $n \in S$.
\end{itemize}
For example, for the vertices $ 0,1,2,3$, hyperedge $(0,1,2)$, $S=\{ 0,1 \}$, and $t=3$, $\D_G$ looks as follows:
\begin{center}
\begin{tikzpicture}
[scale = 1.3,>=stealth']
	
\tikzset{>=latex}
		
\draw[->, gray](-0.3,0)--(5.3,0);
	
		
\foreach \x in {0,4}
{
\foreach \y in {1,...,4}
{
\draw[-, thick, black] (  $(\x + \y/5,0)+(0,0.05)$) -- ($(\x+ \y/5,0)+(0,-0.05)$ );
\node(\x and\y) at ($(\x+ \y/5,0)+(0,-0.3)$) {} ;

\node at ($(\x and\y)+(0,0.5)$) {\scriptsize $A$} ;
}
}

\node(aux) at ($(1.4,0)+(0,-0.3)$) {} ;
\node at ($(aux)+(0,0.5)$) {\scriptsize $A$} ;
\draw[-, thick, black] (1.4,0.05)-- (1.4,-0.05);

\node at ($(aux)+(0,-0.05)$) {\scriptsize $\frac{5}{4}$} ;
\node(2and1) at ($(2.2,0)+(0,-0.3)$) {} ;
\node at ($(2and1)+(0,0.5)$) {\scriptsize $A$} ;
\draw[-, thick, black] (2.2,0.05)-- (2.2,-0.05);

\node(2and2) at ($(2.4,0)+(0,-0.3)$) {} ;
\node at ($(2and2)+(0,0.5)$) {\scriptsize $A$} ;
\draw[-, thick, black] (2.4,0.05)-- (2.4,-0.05);

\node(2and3) at ($(2.6,0)+(0,-0.3)$) {} ;
\node at ($(2and3)+(0,0.5)$) {\scriptsize $A$} ;
\draw[-, thick, black] (2.6,0.05)-- (2.6,-0.05);

\node at ($(0and1)+(0,-0.05)$) {\scriptsize $\frac{0}{4}$} ;
\node at ($(0and2)+(0,-0.05)$) {\scriptsize $\frac{1}{4}$} ;
\node at ($(0and3)+(0,-0.05)$) {\scriptsize $\frac{2}{4}$} ;
\node at ($(0and4)+(0,-0.05)$) {\scriptsize $\frac{3}{4}$} ;

\node at ($(2and1)+(0,-0.05)$) {\scriptsize $\frac{8}{4}$} ;
\node at ($(2and2)+(0,-0.05)$) {\scriptsize $\frac{9}{4}$} ;
\node at ($(2and3)+(0,-0.05)$) {\scriptsize $\frac{10}{4}$} ;

\node at ($(4and1)+(0,-0.05)$) {\scriptsize $\frac{16}{4}$} ;
\node at ($(4and2)+(0,-0.05)$) {\scriptsize $\frac{17}{4}$} ;
\node at ($(4and3)+(0,-0.05)$) {\scriptsize $\frac{18}{4}$} ;
\node at ($(4and4)+(0,-0.05)$) {\scriptsize $\frac{19}{4}$} ;
		
\node at  ($(0and1)+(0,0.7)$) {\scriptsize$R$} ;		
\node at  ($(0and2)+(0,0.7)$) {\scriptsize$R$} ;		

\node at  ($(4and1)+(0,0.7)$) {\scriptsize$R$} ;		
\node at  ($(4and2)+(0,0.7)$) {\scriptsize$R$} ;		
	
			
\node[gray] at  ($(2and1)+(0,1)$) {\scriptsize$R'$} ;
\node[gray] at  ($(2and2)+(0,1)$) {\scriptsize$R'$} ;
\node[gray] at  ($(2and3)+(0.05,1)$) {\scriptsize$R''$} ;	

\node[gray] at  ($(4and3)+(0,1)$) {\scriptsize$R$} ;			
		
\draw [gray,draw=none, decorate,decoration={brace,amplitude=4pt,raise=0pt},yshift=0pt]
(-0.1,0.7) -- (-0.1,0.7) node [black,midway,xshift=-0.6cm] {\color{gray} \footnotesize by~$\Pi$:};

\draw [decorate,draw=none,decoration={brace,amplitude=4pt,raise=0pt},yshift=0pt]
(-0.1,0.02) -- (-0.1,0.6) node [black,midway,xshift=-0.5cm] {\footnotesize
$\mathcal{D}_G:$};
		
\draw [decorate,decoration={brace,mirror,amplitude=4pt,raise=0pt},yshift=0pt]
(1.25,-0.6) -- (2.65,-0.6) node [black,midway,yshift=-0.4cm] {\footnotesize hyperedge $(v_0,v_1,v_2)$
};	
\end{tikzpicture}
\end{center}
Let $\Pi$ be a $\cMTL\!^\boxminus$ program with the rules:
$$
\boxminus_{[2,2]}    R  \to R'\!,
\boxminus_{(0,1]}   R' \to R''\!\!,
\boxminus_{[2,2]} R''\!\to R,
\boxminus_{[4,4]} R \to R.
$$
Then $4k +  t/N$ is a certain answer to $(\Pi,R)$ over $\mathcal{D}_G$ iff $t$ is accessible from $S$ in $G$.

$(iii)$
%
The upper bound can be shown by reduction to FO(TC) via linear datalog(FO). Without loss of generality, we assume that, in the disjointness constraints $\vartheta_1 \land \vartheta_2 \to \bot$ occurring in the given $\cMTL\!^\diamondminus$-OMQ $\q = (\Pi,A)$, the $\vartheta_i$ are atomic. First, we straightforwardly translate $\q$ with the disjointness constraints removed from $\Pi$ to linear datalog(FO). Then, we transform the result into an FO(TC)-query $\varPsi_A(x)$~\cite{DBLP:conf/csl/Gradel91}. Now, for every disjointness constraint $B_1 \land B_2 \to \bot$ in $\Pi$, we take the sentence $\exists x (\varPsi_{B_1}(x) \land \varPsi_{B_2}(x))$ and, finally, form a disjunction of $\varPsi_A(x)$ with those sentences, which is obviously an FO(TC)-rewriting of $\q$.

We prove \NL-hardness by reduction of the reachability problem in acyclic digraphs. Let $G$ be such a digraph with $N_0$ vertices enumerated by consecutive natural numbers starting from 0 so that, if there is an edge from $n$ to $m$, then $n<m$. Let $e_0,  \dots , e_{k-1}$ be  the lexicographical order of edges. 
Take the minimal $N = 2^i \ge N_0$ for $i \in \mathbb{N}$. Suppose we want to check whether a vertex $t$ is accessible from $s$. Let $\mathcal{D}_G$ consist of the atoms
%
$A(4i +  n/N )$, for $0 \leq i \leq k$ and a vertex $n$;		
%
$A(2+  4i +  n/N  )$, $A(2+  4i + m/N )$,   for every edge $e_i=( n,m)$;		
%
$R(4i +  s/N )$, for  $0 \leq i \leq k$.
%
%
An example of $G$ and an initial part of $\D_G$ is shown below:
\begin{center}
\begin{tikzpicture}
[scale = 1.3,>=stealth']
	
\scriptsize	

\tikzset{>=latex}
				
\node[ label={[yshift=-15,xshift=-8]{$s=0$}}](v0) at (0.8,1.3) {\textbullet} ;
\node[ label={[yshift=-15]{$1$}}](v1) at (2.8,1.5) {\textbullet} ;
\node[ label={[yshift=-15]{$2$}}](v2) at (2.8,1) {\textbullet} ;
\node[ label={[yshift=-15,xshift=5]{$3=t$}}](v3) at (4.8,1.3) {\textbullet} ;
		
\path[->]
(v0) edge[bend left=-7]  node {} (v2)
(v2) edge[bend left=-7] node {} (v3)
(v1) edge[bend left=7]  node {} (v3);
		
\draw[thick,->, gray](-0.3,0)--(5.3,0);
	
		
\foreach \x in {0,4}
{
\foreach \y in {1,...,4}
{
\draw[-, thick, black] (  $(\x + \y/5,0)+(0,0.05)$) -- ($(\x+ \y/5,0)+(0,-0.05)$ );
\node(\x and\y) at ($(\x+ \y/5,0)+(0,-0.3)$) {} ;
						
\node at ($(\x and\y)+(0,0.5)$) {\scriptsize $A$} ;

}
}

\node(2and1) at ($(2.2,0)+(0,-0.3)$) {} ;
\node at ($(2and1)+(0,0.5)$) {\scriptsize $A$} ;
\draw[-, thick, black] (2.2,0.05)-- (2.2,-0.05);

\node(2and3) at ($(2.6,0)+(0,-0.3)$) {} ;
\node at ($(2and3)+(0,0.5)$) {\scriptsize $A$} ;
\draw[-, thick, black] (2.6,0.05)-- (2.6,-0.05);

\node at ($(0and1)+(0,0.07)$) {\scriptsize $\frac{0}{4}$} ;
\node at ($(0and2)+(0,0.07)$) {\scriptsize $\frac{1}{4}$} ;
\node at ($(0and3)+(0,0.07)$) {\scriptsize $\frac{2}{4}$} ;
\node at ($(0and4)+(0,0.07)$) {\scriptsize $\frac{3}{4}$} ;

\node at ($(2and1)+(0,0.07)$) {\scriptsize $\frac{8}{4}$} ;
\node at ($(2and3)+(0,0.07)$) {\scriptsize $\frac{10}{4}$} ;

\node at ($(4and1)+(0,0.07)$) {\scriptsize $\frac{16}{4}$} ;
\node at ($(4and2)+(0,0.07)$) {\scriptsize $\frac{17}{4}$} ;
\node at ($(4and3)+(0,0.07)$) {\scriptsize $\frac{18}{4}$} ;
\node at ($(4and4)+(0,0.07)$) {\scriptsize $\frac{19}{4}$} ;
		
\node at  ($(0and1)+(0,0.7)$) {\scriptsize$R$} ;		

\node at  ($(4and1)+(0,0.7)$) {\scriptsize$R$} ;		

			
\node[gray] at  ($(2and1)+(0,0.9)$) {\scriptsize$R'$} ;
\node[gray] at  ($(2and3)+(0,0.9)$) {\scriptsize$R''$} ;	

\node[gray] at  ($(4and3)+(0,0.9)$) {\scriptsize$R$} ;	

\draw [draw=none,decorate,decoration={brace,amplitude=4pt,raise=0pt},yshift=0pt]
(-0.1,1.2) -- (-0.1,1.4) node [black,midway,xshift=-0.4cm] {
$G$:};

\draw [gray,draw=none, decorate,decoration={brace,amplitude=4pt,raise=0pt},yshift=0pt]
(-0.1,0.4) -- (-0.1,0.9) node [black,midway,xshift=-0.6cm] {\color{gray}  by~$\Pi$:};

\draw [decorate,draw=none,decoration={brace,amplitude=4pt,raise=0pt},yshift=0pt]
(-0.1,0.02) -- (-0.1,0.6) node [black,midway,xshift=-0.5cm] {
$\mathcal{D}_G:$};

\draw [decorate,decoration={brace,mirror,amplitude=4pt,raise=0pt},yshift=0pt]
(2.1,-0.35) -- (2.7,-0.35) node [black,midway,yshift=-0.3cm] { edge $e_0=(0,2)$
};	

\end{tikzpicture}
\end{center}
Let $\Pi$ be a $\cMTL\!^\diamondminus$ program with the following rules:
%
$$
\diamondmin_{[2,2]} R \to R'\!, \,
\diamondmin_{(0,1]}   R' \!\to R''\!\! , \,
 \diamondmin_{[2,2]} R'' \!\to R,
\,
\diamondmin_{[4,4]} R \to R.
$$
Then $4k +  t/N$ is a certain answer to $(\Pi,R)$ over $\mathcal{D}_G$ iff $t$ is reachable from $s$ in $G$.
%
\end{proof}


To obtain finer complexity results, we classify \MTL-OMQs by the type of ranges $\range$ in their operators $\diamondmin_\range$ and $\boxminus_\range$: infinite, punctual, and non-punctual. Let $\langle$ be one of $($ or $[$, and let $\rangle$ be one of $)$ or $]$.


\section{OMQs with Ranges $\langle r,\infty)$}

First, consider OMQs with $\diamondmin_{\langle r,\infty)}$ and $\boxminus_{\langle r,\infty)}$, which resemble \LTL-operators `sometime' and `always in the past'\!. Using partially-ordered automata, it was shown in~\cite{DBLP:conf/ijcai/ArtaleKKRWZ15} that \LTL-OMQs with these operators are FO-rewritable. Although such automata are not applicable now, we establish the same complexity by characterising the structure of models. In the constructions below, it will be convenient to regard $\boxminus_\range$ as an abbreviation for $\neg \diamondmin_\range \neg$ with Boolean negation $\neg$ and only consider, without loss of generality, OMQs $(\Pi,A)$ with $A$ occurring in $\Pi$.
\begin{theorem}\label{aco}
\MTL-OMQs with temporal operators of the form $\diamondmin_{\langle r, \infty)}$ and $\boxminus_{\langle r, \infty)}$ only are $\textup{FO}(<)$-rewritable.
\end{theorem}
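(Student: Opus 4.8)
The plan is to reduce the rewriting to a single model-existence statement and show the latter is FO$(<)$-expressible. Recall that $t_0$ is a certain answer to $(\Pi,A)$ over $\D$ iff there is \emph{no} model $\I$ of $\Pi,\D$ with $t_0\notin A^\I$ (if $\Pi,\D$ is inconsistent there are no models at all, and vacuously every timestamp is a certain answer). So it suffices to build an FO$(<)$-formula $\Phi(x)$ that holds in $\D$ at $t_0$ exactly when such a model exists; then $\neg\Phi(x)$ is the required rewriting. Reading $\boxminus_{\langle r,\infty)}B$ as $\neg\diamondmin_{\langle r,\infty)}\neg B$, let $\mathcal{O}$ be the (boundedly many) temporal subformulas $\diamondmin_{\langle r,\infty)}B$ occurring in $\Pi$. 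The key observation is \emph{persistence}: in any interpretation over $\D$, if $t<t'$ in $\tem(\D)$ and $t\in(\diamondmin_{\langle r,\infty)}B)^\I$, then $t'\in(\diamondmin_{\langle r,\infty)}B)^\I$ (a witness at distance $\ge r$ before $t$ is also at distance $\ge r$ before $t'$), and dually $\boxminus_{\langle r,\infty)}B$ can only switch from true to false as time grows. Consequently, the function sending each $t\in\tem(\D)$ to the set of formulas of $\mathcal{O}$ true at $t$ is, up to complementing the $\boxminus$-coordinates, monotone along $<$, so it splits $\tem(\D)$ into at most $|\mathcal{O}|+1$ consecutive \emph{blocks} of constant ``temporal state''.

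The structural step is to show that realizability of a candidate (bounded, monotone) sequence of block-states decomposes into purely local, FO-checkable conditions. Fix such a sequence $\sigma$, presented by at most $|\mathcal{O}|$ breakpoint timestamps together with a state for each block; substituting the $\sigma$-values of the temporal subformulas into $\Pi$ turns it, block by block, into a data-independent propositional theory $\Pi_\sigma$ over the atoms of $\Pi$. Since rules relate only atoms evaluated at the same timestamp, a model $\I$ inducing $\sigma$ and with $t_0\notin A^\I$ exists iff one can pick, independently for each $t\in\tem(\D)$, a set $I(t)\supseteq\boldsymbol{A}(t)$ such that: (a) $I(t)\models\Pi_{\sigma(t)}$; (b) $I(t)$ contains each atom forced present at $t$ --- $B$ when $t$ is at distance in $\langle r,\infty)$ before the \emph{last} timestamp of the last block asserting $\boxminus_{\langle r,\infty)}B$ true --- and omits each atom forced absent at $t$ --- $B$ when $t$ is at distance in $\langle r,\infty)$ before the last timestamp of the last block asserting $\diamondmin_{\langle r,\infty)}B$ false --- as well as $A$ if $t=t_0$; and (c) for every operator on which $\sigma$ switches, an appropriately placed timestamp $w$ (at distance in $\langle r,\infty)$ before the first timestamp of the switching block) carries the right atom --- $B\in I(w)$ for a $\diamondmin_{\langle r,\infty)}B$-switch, $B\notin I(w)$ for a $\boxminus_{\langle r,\infty)}B$-switch. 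Here persistence is used again: requiring (b) at every timestamp of a block is the same as requiring it at the block's endpoints, so only boundedly many ``forced'' regions arise, and a single witness per switch suffices for all later timestamps. The nontrivial direction --- building $\I$ from such local data --- is a routine check that the chosen $I(t)$'s make every rule true at every timestamp and that the induced temporal states are exactly $\sigma$.

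To turn this into FO$(<)$, note that every ``region'' above is defined from the guessed breakpoint timestamps via the FO-definable predicates $\mathsf{dist}_{\ge r}(x,y)$, $\mathsf{dist}_{>r}(x,y)$ over $\D$, so membership of $t$ in any of them is FO in $t$ and the breakpoints. We existentially guess, besides the at most $|\mathcal{O}|$ breakpoints, at most $|\mathcal{O}|$ witness timestamps (one per switching operator), and wrap everything in the finite disjunction over monotone block-state sequences $\sigma$. Given the guesses, condition (a)+(b)+``$t$ carries its witness atom'' amounts to: the finite set consisting of $\boldsymbol{A}(t)$, the atoms forced present at $t$, and the atom (if any) for which $t$ is the guessed witness, extends to a model of $\Pi_{\sigma(t)}$ avoiding the atoms forced absent at $t$ (and $A$ if $t=t_0$); for each of the finitely many region-membership patterns of $t$ this is a data-independent property of $\boldsymbol{A}(t)\in 2^S$, hence a finite Boolean combination of atoms holding at $t$, i.e.\ FO in $t$. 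Universally quantifying $t$ over $\tem(\D)$ and conjoining the obvious FO conditions that the guesses are mutually consistent (breakpoints ordered, states monotone, each witness in its region) gives $\Phi(x)$; and $\neg\Phi(x)$ is an FO$(<)$-rewriting of $(\Pi,A)$.

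The main obstacle is the structural claim of the second paragraph: that realizability of a block-state sequence factors through conditions (a)--(c) with no residual unbounded coordination between timestamps. The delicate points are (i) the collapse, by persistence, of the ``forced present/absent over a whole block'' constraints to constraints anchored at the block endpoints, so that only boundedly many FO-definable regions occur; and (ii) the case where several switch-witnesses fall on the same timestamp, which is why the witnesses are guessed explicitly and their demands are folded into the single pointwise extendability test on $I(t)$. Some further care is needed for the endpoint conventions of $\langle r,\infty)$ (open versus closed, and $r=0$) and for timestamps whose whole past is too short to witness a given operator --- in the latter case the corresponding ``true diamond'' / ``false box'' demand is unfulfillable and the relevant disjunct of $\Phi$ is correctly false.
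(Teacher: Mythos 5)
Your proposal is correct and follows essentially the same route as the paper: both rest on the monotonicity (persistence) of the $\diamondmin_{\langle r,\infty)}$-literals, reduce certain answering to the FO-expressible existence of a countermodel, and existentially guess a bounded number of distinguished timestamps (your block breakpoints and switch witnesses play the role of the paper's ``osteo-types'') while checking all remaining timestamps by a universally quantified local condition built from the $\mathsf{dist}$-predicates. The only difference is bookkeeping: the paper folds witnesses and breakpoints into a single sequence of osteo-types with $\textit{wit}/\overline{\textit{wit}}$ annotations, whereas you guess them separately, which changes nothing essential.
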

\begin{proof}
Let $\q = (\Pi,A)$ be an \MTL-OMQ as specified above. A \emph{simple literal}, $\slit$, for $\Pi$ takes the form $P$ or $\neg P$, where $P$ is an atom in $\Pi$; a \emph{temporal literal}, $\tlit$, for $\Pi$ is of the form $\diamondmin_\range \sigma$ or $\neg  \diamondmin_\range \sigma$ provided that $\diamondmin_\range P$ or $\boxminus_\range P$ occurs in $\Pi$ and $P$ is the atom in $\sigma$. Let $\varSigma_\Pi$ and $\varXi_\Pi$ be the sets of simple and temporal literals for $\Pi$, respectively.
A \emph{type} for $\Pi$ is any maximal set $\type \subseteq \varSigma_\Pi \cup \varXi_\Pi$  consistent with $\Pi$. The number of different types is $N_\Pi = 2^{O(|\Pi|)}$.

Given a model $\I$ of $\Pi$ and some $\D$ with $s\in\tem(\D)$, denote by $\type(s)$ the type of $s$ in $\I$. As the ranges in $\Pi$ are of the form $\langle r, \infty)$, the model $\I$ has the following \emph{monotonicity property}:
\begin{itemize}
\item[--] $\diamondmin_\range \slit \in \type(s)$ implies $\diamondmin_\range \slit \in \type(s')$ for all $s' > s$ in $\I$;

\item[--] $\neg \diamondmin_\range \slit \in \type(s)$ implies $\neg \diamondmin_\range \slit \in \type(s')$ for all $s' < s$ in $\I$.
\end{itemize}
We call $\type(s)$ in $\I$ an \emph{osteo-type} if there is $\lit \in \type(s)$ such that $\lit \notin \type(s')$, for all $s' < s$. Thus, if $\diamondmin_\range \slit \in \type(s')$ in $\I$, there is an osteo-type $\type(s) \ni \slit$ with $\inr_\range(s',s)$.  All osteo-types in $\I$ are pairwise distinct, so the number of them does not exceed $N_\Pi$.
Non-osteo-types are called \emph{fluff-types}. By monotonicity, any fluff-type $\type(s')$ has the same temporal literals as its nearest osteo-type $\type(s)$, for $s < s'$. For example, in the model of the program $\Pi = \{\boxminus_\range P \land \diamondmin_\range P \land P \to \bot\}$, $\range = [1,\infty)$, shown below, there are three fluff-types: $\type(3/4)$, $\type(9/8)$, and $\type(5/4)$.
\begin{center}
\begin{tikzpicture}
[scale = 1.3,>=stealth']
	
\tikzset{>=latex}
		
\draw[thick,->, gray](-0.2,0)--(6.2,0);

\scriptsize

\node(rho1) at (0.2,0){ };
\node(rho2) at (1.6,0){ };
\node(rho3) at (2.4,0){ };
\node(rho4) at (3.3,0){ };
\node(rho5) at (4.1,0){ };
\node(rho6) at (4.8,0){ };
\node(rho7) at (5.8,0){ };

\node(rho1_up)  at ($(rho1)+(0,0.06)$) {};
\node(rho1_down)  at ($(rho1)+(0,-0.06)$) {};
\draw[-, very thick] (  rho1 |- rho1_up) -- (rho1 |- rho1_down);
\node[below  = 0.01 of rho1] {$0$};
\node[above  = 0 of rho1] {$\neg \diamondmin_\range P$};
\node[above  = 0.4 of rho1] {$\neg \diamondmin_\range \neg P$};
\node[above  = 0.8 of rho1] {$  \neg P$};

\node(rho2_up)  at ($(rho2)+(0,0.06)$) {};
\node(rho2_down)  at ($(rho2)+(0,-0.06)$) {};
\draw[-, very thick] (  rho2 |- rho2_up) -- (rho2 |- rho2_down);
\node[below  = 0.01 of rho2] {$ \frac{1}{2}$};
\node[above  = 0 of rho2] {$\neg \diamondmin_\range P$};
\node[above  = 0.4 of rho2] {$\neg \diamondmin_\range \neg P$};
\node[above  = 0.8 of rho2] {$   P$};

\node(rho3_up)  at ($(rho3)+(0,0.06)$) {};
\node(rho3_down)  at ($(rho3)+(0,-0.06)$) {};
\draw[-, very thick] (  rho3 |- rho3_up) -- (rho3 |- rho3_down);
\node[below  = 0.01 of rho3] {$\frac{3}{4}$};
\node[above  = 0 of rho3] {$\neg \diamondmin_\range P$};
\node[above  = 0.4 of rho3] {$\neg \diamondmin_\range \neg P$};
\node[above  = 0.8 of rho3] {$  \neg P$};

\node(rho4_up)  at ($(rho4)+(0,0.06)$) {};
\node(rho4_down)  at ($(rho4)+(0,-0.06)$) {};
\draw[-, very thick] (  rho4 |- rho4_up) -- (rho4 |- rho4_down);
\node[below  = 0.01 of rho4] {$1$};
\node[above  = 0 of rho4] {$\neg \diamondmin_\range P$};
\node[above  = 0.4 of rho4] {$ \diamondmin_\range \neg P$};
\node[above  = 0.8 of rho4] {$  \neg P$};

\node(rho5_up)  at ($(rho5)+(0,0.06)$) {};
\node(rho5_down)  at ($(rho5)+(0,-0.06)$) {};
\draw[-, very thick] (  rho5 |- rho5_up) -- (rho5 |- rho5_down);
\node[below  = 0.01 of rho5] {$\frac{9}{8}$};
\node[above  = 0 of rho5] {$\neg \diamondmin_\range P$};
\node[above  = 0.4 of rho5] {$ \diamondmin_\range \neg P$};
\node[above  = 0.8 of rho5] {$   P$};

\node(rho6_up)  at ($(rho6)+(0,0.06)$) {};
\node(rho6_down)  at ($(rho6)+(0,-0.06)$) {};
\draw[-, very thick] (  rho6 |- rho6_up) -- (rho6 |- rho6_down);
\node[below  = 0.01 of rho6] {$\frac{5}{4}$};
\node[above  = 0 of rho6] {$\neg \diamondmin_\range P$};
\node[above  = 0.4 of rho6] {$ \diamondmin_\range \neg P$};
\node[above  = 0.8 of rho6] {$   P$};

\node(rho7_up)  at ($(rho7)+(0,0.06)$) {};
\node(rho7_down)  at ($(rho7)+(0,-0.06)$) {};
\draw[-, very thick] (  rho7 |- rho7_up) -- (rho7 |- rho7_down);
\node[below  = 0.01 of rho7] {$\frac{3}{2}$};
\node[above  = 0 of rho7] {$ \diamondmin_\range P$};
\node[above  = 0.4 of rho7] {$ \diamondmin_\range \neg P$};
\node[above  = 0.8 of rho7] {$\neg   P$};

\draw [decorate,decoration={brace,mirror,amplitude=4pt,raise=0pt},yshift=0pt]
(3.9,-0.35) -- (5.1,-0.35) node [black,midway,yshift=-0.3cm] {fluff-types
};	

\draw [decorate,decoration={brace,mirror,amplitude=4pt,raise=0pt},yshift=0pt]
(2.1,-0.35) -- (2.7,-0.35) node [black,midway,yshift=-0.3cm] {fluff-type
};	
\end{tikzpicture}
\end{center}
We now define an FO-sentence $\Phi_\Pi$ such that any given data instance $\D$ is consistent with $\Pi$ iff $\Phi_\Pi$ holds in the FO-structure $\D$. Let $\mathfrak O_\Pi$ be the set of sequences $\bar\type=(\type_1,\dots,\type_n)$, $1 \le n \le N_\Pi$, of distinct types for $\Pi$ that satisfy the monotonicity property and such that $\diamondmin_\range \slit \in \type_i$ implies $\slit \in \type_j$ for some $j\le i$; for minimal such $j$, we write $\textit{wit}(\type_i,\type_j,\range)$. We write $\overline{\textit{wit}}(\type_i,\type_j,\range)$ if $j\le i$, $\neg\diamondmin_\range \slit \in \type(s_i)$ and $\slit \in \type(s_j)$, for some $\diamondmin_\range \slit$.
Denote by $\mathfrak F^i_{\bar\type}$ the set of types $\type$ for $\Pi$ sharing the same temporal literals with $\type_i$ and such that, for every $\sigma \in \type$, there is $\type_j \ni \sigma$ with $j \le i$.
%
Finally, for any type $\type$, let $\delta_\type(x) = \bigwedge_{\neg P \in \type} \neg P(x)$ (which is true at $t$ in $\D$ iff, for every $P$ in $\Pi$, whenever $P(t)  \in \D$ then $P(t) \in\type$). Now, we set
%
\begin{multline*}
\Phi_\Pi ~=~ \hspace*{-1mm} \bigvee_{\bar \type \in \mathfrak O_\Pi} \hspace*{-1mm}  \exists x_1,\dots,x_n \big[ (x_1 = \min)  \land \bigwedge_{1\le i\le n} \delta_{\type_i}(x_i) \land{}\\
\bigwedge_{{\it wit}(\type_i,\type_j,\range)} \inr_\range(x_i,x_j) \land  \bigwedge_{\overline{\it wit}(\type_i,\type_j,\range)} \neg\inr_\range(x_i,x_j)  \land{} \\
\forall y  \hspace*{-1mm} \bigwedge_{1\le i \le n}  \hspace*{-2mm} \big( (x_i \prec y) \to \bigvee_{\type \in \mathfrak F^i_{\bar\type}} (\delta_\type(y) \land{}   \hspace*{-2mm} \bigwedge_{\overline{\it wit}(\type_i,\type_j,\range)} \hspace*{-5mm} \neg\inr_\range(y,x_j) )\big)\big],
\end{multline*}
where $x_i \prec y$ says that $x_i$ is the nearest predecessor of $y$, which is different from $x_1,\dots,x_n$.

Suppose $\I$ is a model of $\Pi$, $\D$ and $\bar \type = (\type(t_1),\dots,\type(t_n))$, for $t_1 < \dots < t_n$, are all the osteo-types in $\I$. This $n$-tuple of types is in $\mathfrak O_\Pi$ and the $\delta_{\type(t_i)}(t_i)$ are true in $\I$ by definition. The $\inr_\range(t_i,t_j)$ also hold for ${\it wit}(\type(t_i),\type(t_j),\range)$ because $\type(t_i)$ is the first type in $\I$ witnessing the relevant $\diamondmin_\range \sigma$. Similarly, $\inr_\range(t_i,t_j)$ does not hold in $\I$ for $\overline{\it wit}(\type(t_i),\type(t_j),\range)$. Finally, let $t$ be any timestamp in $\I$ with $t_i \prec t$. By construction, $\type(t)$ is a fluff-type in $\mathfrak F^i_{\bar\type}$ and $\delta_{\type(t)}(t)$ holds in $\I$. If $\overline{\it wit}(\type(t_i),\type(t_j),\range)$, we have $\neg \diamondmin_{\range} \sigma \in \type(t_i) \cap \type(t)$ and $\sigma \in \type(t_j)$, and so  $\inr_{\range}(t,t_j)$ cannot hold in $\I$. Thus, $\D\models\Phi_\Pi$.

Conversely, suppose $\Phi_\Pi$ holds in $\D$, assigning timestamps $t_i$ to the $x_i$ and associating types $\type(t)$ with every $t \in \tem(\D)$. Define an interpretation $\I$ by setting
$$
P^\I = \{ t \in \tem(\D) \mid P \in \type(t)\},
$$
for every atom $P$. We prove that $\I$ is a model of $\Pi$ and $\D$. As all the $\type(t)$ are types for $\q$, it suffices to show that
$$
\diamondmin_\range \sigma \in \type(t) \quad \Longleftrightarrow \quad \exists t' \, \big( \inr_\range(t,t') ~\land~ \sigma \in \type(t')\big).
$$
Suppose $\diamondmin_\range \sigma \in \type(t)$. If $t=t_i$, for some $i$, then $\textit{wit}(\type_i,\type_j,\range)$, for some $j \le i$, and so $\sigma \in \type(t_j)$ and $\inr_\range(t_i,t_j)$ holds in $\I$. If $t_i \prec t$, then $\diamondmin_\range \sigma \in \type(t_i)$, and we can use the previous  argument as $\inr_\range(t_i,t_j)$ implies $\inr_\range(t,t_j)$.

Conversely, suppose $\diamondmin_\range \sigma \notin \type(t)$. Then $\neg\diamondmin_\range \sigma \in \type(t)$. Consider first the case $t = t_i$. Suppose $t' \le t_i$ with $\sigma \in \type(t')$. Then $\sigma \in \type(t_j)$ for some $t_j \le t'$, and so $\overline{\it wit}(\type_i,\type_j,\range)$ and $\neg\inr_\range(t_i,t_j)$, whence $\neg\inr_\range(t,t_j)$ and $\neg\inr_\range(t,t')$. Now, let $t \notin \{t_1,\dots, t_n\}$. Then $t_i \prec t$ for some $i$ (because of $\min x_1$). Suppose $t' \le t$ with $\sigma \in \type(t')$. If $t' < t_i$, then $\sigma \in \type(t_j)$ for some $t_j \le t'$, and so $\overline{\it wit}(\type_i,\type_j,\range)$ and $\neg\inr_\range(t,t_j)$, whence $\neg\inr_\range(t,t')$. If $t' = t_i$, then, by the last conjunct of $\Phi_\Pi$, we have $\neg\inr_\range(t,t_i)$. Finally, if $t_i < t' \le t$, then $\sigma \in \type(t_j)$, for some $t_j \le t_i$, and we are done again.

An FO$(<)$-rewriting of $\q$ is the FO formula $\neg \Phi_{\neg A}(x)$, where $\Phi_{\neg A}(x)$ is obtained from $\Phi_\Pi$ by replacing $\delta_\type(z)$ with $\delta_\type(z, x)$, which is $\delta_\type(z)$ if $\neg A \in \type$ and $\delta_\type(z) \land (x \neq z)$ otherwise. Clearly, $\Phi_{\neg A}(x)$ holds in $\D$ iff there is a model of $\Pi$ and $\D$ satisfying $\neg A$ in $x$.
%
\end{proof}

We also mention in passing one more FO-rewritability result (which does not fit our classification). To formulate it, we require a few definitions.

\textbf{Normal form.} Until the end of this section, we assume that our \MTL{} programs and OMQs are in normal form. Namely, a program is said to be in \emph{normal form} if its rules have one of the forms:
\begin{align}
& \diamondminus_{\range_1'} P'_1 \land \dots \land \diamondminus_{\range_\ell'} P'_m \to P_0, \label{eq:init}\\
&\diamondminus_{\range_1} P_1 \land \dots \land \diamondminus_{\range_k} P_k \land \diamondminus_{\range_1'} P'_1 \land \dots \land \diamondminus_{\range_\ell'} P'_m \to P_0, \label{eq:trans}
\end{align}
where the $P'_i$ are from the data alphabet (like EDB predicates in datalog) and do not occur in the rule heads, while the $P_i$ do not occur in data instances, and $0 \notin \range_i$ for any $i$ (although there may be $\range'_i=[0,0]$). Every $\hMTL\!^\diamondminus$ program can be transformed to a program in normal form with the same answers. We illustrate this claim by an example.
%
\begin{example}\em
Let $\Pi = \{  \diamondminus_{[0,d]} P_0' \land Q_0' \to P_1',\
\diamondminus_{(0,e)} P_1' \land \diamondminus_{[0,f]} Q_1' \to P_0'\}$,
where the $P_i'$ are in the data alphaet. By introducing fresh atoms $P_0$, $P_1$, we convert $\Pi$ to
$$
\diamondminus_{[0,d]} P_0 \land Q_0' \to P_1,\
\diamondminus_{(0,e)} P_1 \land \diamondminus_{[0,f]} Q_1' \to P_0, \
 P_0' \to P_0,\ P_1' \to P_1.
$$
To get rid of $[0,d]$, we further transform the program to
$$
P_0 \land Q_0' \to P_1,\
\diamondminus_{(0,d]} P_0 \land Q_0' \to P_1,\
\diamondminus_{(0,e)} P_1 \land \diamondminus_{[0,f]} Q_1' \to P_0, \
P_0' \to P_0,\ P_1' \to P_1.
$$
Now, $P_0$ in the first rule is not in the scope of $\diamondminus_{\range}$ ($Q_0'$ can be regarded as a shorthand for $\diamondminus_{[0,0]} Q_0'$). So we transform the rule  using obvious derivations to obtain the following program in normal form:
\begin{multline*}
P_0' \land Q_0' \to P_1,\ \diamondminus_{(0,e)} P_1 \land \diamondminus_{[0,f]} Q_1' \land Q_0' \to P_1,\
\diamondminus_{(0,d)} P_0 \land Q_0' \to P_1,\\
\diamondminus_{(0,e)} P_1 \land \diamondminus_{[0,f]} Q_1' \to P_0, \
P_0' \to P_0,\ P_1' \to P_1.
\end{multline*}
\end{example}

A $\hMTL\!^\diamondminus$ query $(\Pi, A(x))$ is in \emph{normal form} if $\Pi$ is in normal form and $A$ is not in the data alphabet. Clearly, every query can be converted to a one in normal form and having the same answers.

\textbf{Metric automata for $\hMTL\!^\diamondminus$.} Our technical tool for studying the data complexity of linear $\hMTL\!^\diamondminus$ queries is  automata with metric constraints that are defined for programs in normal form. These automata can be viewed as a primitive version of standard timed automata for \MTL~\cite{ALUR1994183} as we  only have one clock $c$, the clock reset $c:=0$ happens at every  transition, and the clock constraints are of the simple form $c \in \range$.
	
A (nondeterministic) \emph{metric automaton} is a quadruple $\A = (S, S_0, \Sigma, \delta)$, where $S \ne \emptyset$ is a set of \emph{states}, $\Sigma$ a \emph{tape alphabet}, $\delta$ a \emph{transition relation}, and $S_0$ is a nonempty set of pairs of the form $(q, e)$, where $e \in \Sigma$, $q \in S$. The transition relation $\delta$ is a set of instructions of the form $q \xrightarrow{\range}_e q'$ with $q,q' \in S$, $e \in \Sigma$ and a range $\range$. The automaton $\A$ takes as input \emph{timed words} $\sigma = (e_0, t_0), \dots, (e_n, t_n)$, where the $t_i$ are timestamps with $t_{i-1} < t_{i}$. A \emph{run} over $\sigma$ is a sequence $q_0, \dots, q_m$ such that $(q_0, e_0) \in S_0$, $q_{i-1} \xrightarrow{\range_i}_{e_i} q_{i}$ is in $\delta$ and $t_{i} - t_{i-1} \in \range_i$, for $0 < i \leq n$.
	
Let $\Pi$ be a \emph{linear} $\hMTL\!^\diamondminus$ program in normal form. We denote the conjunctions $\diamondminus_{\range_1'} P'_1 \land \ldots \land \diamondminus_{\range_\ell'} P'_m$ (with data atoms $P_i'$) that occur in $\Pi$ by $\varepsilon$, possibly with subscripts. Thus, since $\Pi$ is linear, rules~\eqref{eq:trans} in $\Pi$ are of the form $\varepsilon \land \diamondminus_{\range} Q \to P$. Let $E_\Pi = \{ \varepsilon_1, \dots, \varepsilon_q \}$ be the set of all such $\varepsilon$ occurring in $\Pi$.
We define a metric automaton $\mathcal{A}_\Pi$ for $\Pi$ as follows.  The set $S$ of its states comprises the head concept names in $\Pi$, and $\Sigma = 2^{E_\Pi}$.
The transition relation $\delta$ comprises $Q \xrightarrow{\range}_E P$ such that $\varepsilon \land \diamondminus_{\range} Q \to P$ is in $\Pi$ and $\varepsilon \in E$. Finally, $S_0$ is the set of all pairs $(P, \varepsilon)$ such that a rule $\varepsilon \to P$ of the form~\eqref{eq:init} is in $\Pi$.
\begin{example}\label{ex:automaton}\em
For $\Pi = \{ \diamondminus_{[0,1]} P_0' \to P_0,\
 \diamondminus_{(1,2)} P_0 \land P_1' \to P_1,\
\diamondminus_{(1,3)} P_1 \to P_0 \}$,
the metric automaton $\A_\Pi$ is depicted below, where $P_0',P_1' \in \Lambda$, $E_0 = \{P_1'\}$, $E_1 = \{\diamondminus_{[0,1]} P_0'\}$, $E_2 = \{P_1', \diamondminus_{[0,1]} P_0'\}$, and $S_0 =\{ (P_0, \diamondminus_{[0,1]} P_0') \}$.\\[3pt]
\centerline{\begin{tikzpicture}[>=latex,xscale=2.5,yscale=1,thick,
state/.style={circle,draw,fill=gray!50,thick,minimum size=4mm,inner sep=1pt}]\small
\node[state] (D) at (0,0) {$P_0$};
\node[state] (E) at (2,0) {$P_1$};
\draw[->] (D) to [out=0,in=180] node[midway, above] {$E_0\ (1,2)$} (E);
\draw[->] (D) to [out=30,in=150, looseness=1.5] node[midway, above] {$E_2\ (1,2)$} (E);
\draw[->] (E) to [out=-120,in=-60] node[midway, above] {$\emptyset \ (1,3)$} (D);
\draw[->] (E) to [out=-110,in=-70, looseness=2] node[midway, above] {$E_0 \ (1,3)$} (D);
\draw[->] (E) to [out=-100,in=-80, looseness=3] node[midway, above] {$E_1 \ (1,3)$} (D);
\draw[->] (E) to [out=-90,in=-90, looseness=4] node[midway, above] {$E_2 \ (1,3)$} (D);
\end{tikzpicture}}
\end{example}

We represent any data instance $\D$ as a timed word $\sigma_\D$. For $t_i$ occurring in $\D$, let $E(t_i)$ be the maximal set of $\varepsilon$ from $\Pi$ that hold at $t_i$ in $\D$, and let $\sigma_\D = \big((E(t_1), t_1), \dots, (E(t_n), t_n)\big)$. 

%

\begin{example}\label{ex:data}\em
A data instance $\D$ and its representation as $\sigma_\D$ are shown below:\\
\centerline{
\begin{tikzpicture}
\node[point,  label=below:{\scriptsize $P_0'$}, label=above:{\scriptsize $0$}] at (0,0) {};
    \node[point,  label=below:{\scriptsize $Q'$}, label=above:{\scriptsize $1$}] at (1,0) {};
    \node[point,  label=below:{\scriptsize $P_1'$},label=above:{\scriptsize $1.5$}] at (1.5,0) {};
    \node[point,  label=below:{\scriptsize $P_0'$}, label=above:{\scriptsize $4$}] at (4,0) {};
    \node[point,  label=below:{\scriptsize $P_1'$}, label=above:{\scriptsize $4.5$}] at (4.5,0) {};
    \node[point,  label=below:{\scriptsize $P_1'$}, label=above:{\scriptsize $5$}] at (5,0) {};
    \node[point,  label=below:{\scriptsize $Q'$}, label=above:{\scriptsize $6.5$}] at (6.5,0) {};
    \node at (-1, -.1) {$\D:$};
    \node at (0,-0.8) { $\emptyset$};
     \node at (1,-0.8) { $E_1$};
     \node at (1.5,-0.8) { $E_0$};
     \node at (4,-0.8) { $\emptyset$};
     \node at (4.5,-0.8) { $E_2$};
     \node at (5,-0.8) { $E_2$};
     \node at (6.5,-0.8) { $\emptyset$};
     \node at (-1, -0.8) {$\sigma_\D:$};
\end{tikzpicture}
}
\end{example}

	%
\begin{theorem}\label{thm:automaton}
For any linear $\hMTL\!^\diamondminus$ OMQ $(\Pi,A(x))$, a timestamp $t_i$ is a certain answer over a data instance $\D$ iff there exist a subword $\sigma_\D'$ of $\sigma_\D$ with the last timestamp $t_i$ and a run of $\A_\Pi$ over $\sigma_\D'$ that ends with $A$.
\end{theorem}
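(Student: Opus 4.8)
The plan is to route the argument through the canonical model $\mathfrak C_{\Pi,\D}$ of Theorem~\ref{thm:canon} and then match derivations in $\mathfrak C_{\Pi,\D}$ with runs of $\A_\Pi$, the point being that linearity forces every derivation to be a \emph{chain}, which is precisely a run. As $\A_\Pi$ is defined only for programs in normal form, we assume $(\Pi,A(x))$ to be in normal form (harmless, since every such OMQ can be brought to normal form preserving answers) and, without loss of generality, that $A$ is a head atom of $\Pi$ (otherwise neither side of the claimed equivalence holds). Every rule of a normal-form program has the shape~\eqref{eq:init} or~\eqref{eq:trans}, hence an atomic head; so $\Pi$ contains no disjointness constraint and is consistent with every $\D$, and by Theorem~\ref{thm:canon}$(ii)$ a timestamp $t_i$ is a certain answer to $(\Pi,A(x))$ over $\D$ iff $(A,t_i)\in\mathfrak C_{\Pi,\D}$. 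It therefore suffices to characterise, for a head atom $P$ and $t\in\tem(\D)$, when $(P,t)\in\mathfrak C_{\Pi,\D}$.

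First a bookkeeping observation: no data atom $P'$ occurs in a rule head, so $P'$ has the same interpretation in $\mathfrak C_{\Pi,\D}$ as in $\D$, whence for every $\varepsilon\in E_\Pi$ and $t\in\tem(\D)$ the conjunction $\varepsilon$ holds at $t$ in $\mathfrak C_{\Pi,\D}$ iff it holds at $t$ in $\D$ iff $\varepsilon\in E(t)$; this ties the ``$\varepsilon$-side'' of the closure rules to the letters of $\sigma_\D$. Unfolding the least-fixpoint definition $\mathfrak C_{\Pi,\D}=\cl^N(\D)$, the key claim is: $(P,t)\in\mathfrak C_{\Pi,\D}$ iff there are timestamps $s_0<s_1<\dots<s_m=t$ in $\tem(\D)$ and head atoms $P_0,\dots,P_m=P$ such that some rule $\varepsilon_0\to P_0$ of the form~\eqref{eq:init} has $\varepsilon_0\in E(s_0)$ and, for each $k$ with $0<k\le m$, some rule $\varepsilon_k\land\diamondminus_{\range_k}P_{k-1}\to P_k$ of the form~\eqref{eq:trans} has $\varepsilon_k\in E(s_k)$ and $\bar s_k-\bar s_{k-1}\in\range_k$. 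Here $s_{k-1}<s_k$ is a consequence rather than a hypothesis: in normal form $0\notin\range_k$ and all ranges lie in $[0,\infty)$, so $\bar s_k-\bar s_{k-1}\in\range_k$ forces $s_{k-1}<s_k$; in particular the $s_k$ form a strictly increasing subsequence of $\tem(\D)$. For the forward direction one reads such a chain off a derivation of $(P,t)$: by linearity every rule body has at most one non-data atom, so the derivation is a path whose deepest link is produced by a rule of type~\eqref{eq:init} (the only shape with no non-data atom in its body), each further link being one type-\eqref{eq:trans} step, with the closure step adding $(\diamondminus_{\range_k}P_{k-1},s_k)$ accounting for its unique non-data conjunct; for the backward direction one replays the chain through the closure rules, using the bookkeeping observation to see that the data conjuncts of each $\varepsilon_k$ really hold at $s_k$ in $\mathfrak C_{\Pi,\D}$. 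The equivalence is established by induction on the stage at which $(P,t)$ enters $\mathfrak C_{\Pi,\D}$.

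A chain as above is, by definition, a run $P_0,\dots,P_m$ of $\A_\Pi$ over the timed word $\big((E(s_0),s_0),\dots,(E(s_m),s_m)\big)$, and this word is a subword of $\sigma_\D$ with last timestamp $t$: the start condition holds because $\varepsilon_0\in E(s_0)$ and $\varepsilon_0\to P_0\in\Pi$; the instruction $P_{k-1}\xrightarrow{\range_k}_{E(s_k)}P_k$ is in $\delta$ because $\varepsilon_k\land\diamondminus_{\range_k}P_{k-1}\to P_k\in\Pi$ with $\varepsilon_k\in E(s_k)$; and $\bar s_k-\bar s_{k-1}\in\range_k$ is exactly the timing constraint of the transition. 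Conversely, every run of $\A_\Pi$ over a subword of $\sigma_\D$ that ends with $A$ at $t_i$ unwinds into such a chain, hence into a derivation witnessing $(A,t_i)\in\mathfrak C_{\Pi,\D}$. Together with the reduction of the first paragraph, this yields the theorem.

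The genuinely delicate step is the induction in the second paragraph. Because $\cl$ interleaves the propositional rule with the rule that adds $\diamondminus_\range$-pairs, the induction measure must be set up so that, when a rule of type~\eqref{eq:trans} is applied at $s_m$, the witness $s_{m-1}$ for its unique non-data conjunct $\diamondminus_{\range_m}P_{m-1}$ already carries a strictly shorter chain, so that the inductive hypothesis applies to $(P_{m-1},s_{m-1})$. The role of the normal-form restriction $0\notin\range_k$---which is what keeps consecutive $s_k$ distinct, so that the chain is an honest subword of $\sigma_\D$ rather than a word with a repeated position---must also be made explicit precisely here.
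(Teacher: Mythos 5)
Your proposal is correct: the reduction to the canonical model $\mathfrak C_{\Pi,\D}$ via Theorem~\ref{thm:canon} (noting that normal-form programs have atomic heads and are thus always consistent), the characterisation of $(P,t)\in\mathfrak C_{\Pi,\D}$ by strictly increasing derivation chains forced by linearity and by $0\notin\range$, and the identification of such chains with runs of $\A_\Pi$ over subwords of $\sigma_\D$ is exactly the argument the statement calls for. The paper itself states Theorem~\ref{thm:automaton} without giving a proof, so there is nothing to diverge from; your write-up supplies the expected argument and correctly isolates the two points that need care (the induction on the closure stage and the role of $0\notin\range_k$ in keeping the chain a genuine subword).
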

\begin{example}\em
Let $(\Pi, P_1(x))$ be an OMQ with $\Pi$ from Example~\ref{ex:automaton}. Then, for $\sigma_\D$ from Example~\ref{ex:data}, we have the run $P_0, P_1, P_0, P_1$ on
$$
(E_1, 1), (E_0, \frac{3}{2}), (\emptyset, 4), (E_2, 5),
$$
and so $5$ is a certain answer to the query over $\D$ from Example~\ref{ex:data}.
\end{example}

One could define metric automata as classical timed automata; however, Theorem~\ref{thm:automaton} does not use them in the standard way as it requires runs on subwords. Whether and how such runs can be captured by timed automata remains to be clarified.
We now use the obtained automaton characterisation of certain answers for linear queries to give better complexity bounds for the case of restricted temporal ranges than the \NL{} bound of Theorem~\ref{start-complexity} $(iii)$.

Call an \MTL-program \emph{range-uniform} if all of its temporal operators have the same constraining range.
\begin{theorem}\label{thm:coreminus}
Range-uniform $\cMTL\!^\diamondminus$-OMQs with ranges of the form $\diamondmin_{\langle 0,r\rangle}$ are \textup{FO}$(<,+)$-rewritable.
\end{theorem}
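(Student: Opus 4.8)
The plan is to combine the metric-automaton characterisation of Theorem~\ref{thm:automaton} with a structural analysis of runs of a \emph{range-uniform} metric automaton, so that ``$t$ is a certain answer'' becomes an arithmetical statement about the positions of timestamps in $\D$ that FO$(<,+)$ can check.

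First I would reduce the problem to a derivability query for the Horn part. By Theorem~\ref{thm:canon}, $t$ is a certain answer to $(\Pi,A)$ over $\D$ iff $\mathfrak C_{\Pi,\D}\models A[t]$ or $\Pi$ is inconsistent with $\D$, and inconsistency holds iff the body of some denial constraint of $\Pi$ is satisfied at some timestamp of $\mathfrak C_{\Pi,\D}$; as such a body has at most two conjuncts, each of the form $B$ or $\diamondmin_{\langle 0,r\rangle}B$, this is a fixed Boolean combination, over one existentially quantified timestamp, of statements ``$B$ holds in $\mathfrak C_{\Pi,\D}$ somewhere in a window of width $r$''. So it suffices to build, for every atom $B$ of $\Pi$, an FO$(<,+)$-formula $\psi_B(x)$ expressing ``$B$ holds at $x$ in $\mathfrak C_{\Pi,\D}$'' (for $\Pi$ with its denial constraints dropped), and output $\psi_A(x)$ in disjunction with those finitely many sentences. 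Putting $\Pi$ into normal form keeps it linear and range-uniform: all transitions of the metric automaton $\A_\Pi$ carry one and the same range $\rho$, which is $(0,r)$ or $(0,r]$ (the $[0,0]$-part split off from an $[0,r\rangle$-operator disappears once the derived atoms under the diamond are expanded), its transitions are, for core programs, essentially unconditional, and its initial pairs $(Q,\varepsilon)\in S_0$ carry only FO$(<)$-definable data conditions $\varepsilon$. For an atom $B$ that is not a head atom of $\Pi$, $\psi_B(x)$ is just $B(x)$; otherwise Theorem~\ref{thm:automaton} gives ``$B$ holds at $t$'' iff some subword of $\sigma_\D$ ending at $t$ admits a run of $\A_\Pi$ ending in state $B$.

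Such a run splits into a start timestamp $s\le t$, an initial pair $(Q,\varepsilon)\in S_0$ with $\varepsilon$ satisfied at $s$, a chain $s=u_0<\dots<u_m=t$ of $\D$-timestamps with every gap $u_p-u_{p-1}\in\rho$, and a walk of length $m$ from $Q$ to $B$ in the transition graph of $\A_\Pi$. I would decouple the two sides using two facts. (a) If $s$ and $t$ lie in the same maximal block of $\tem(\D)$ with no gap above $r$ (expressible in FO$(<)$ via $\mathsf{dist}$ and $\suc$), then the set of lengths $m$ of such chains is exactly the integer interval $[\ell_{\min}(s,t),\,j-i]$, where $i,j$ are the $\tem(\D)$-indices of $s,t$, so $j-i$ counts the intermediate timestamps and is PLUS-computable, and $\ell_{\min}(s,t)$ is the least number of $\rho$-hops (found greedily); the interval is filled because any timestamp a chain skips can be re-inserted, lengthening it by one. (b) The set $\mathcal L(Q,B)$ of walk lengths from $Q$ to $B$ in $\A_\Pi$ is a fixed eventually periodic set of naturals, with threshold $L_0$ and modulus $d$ depending only on $\Pi$. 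Hence $\psi_B(x)$ asserts: there are $s\le x$ in the same $r$-block and an initial pair $(Q,\varepsilon)$ with $\varepsilon$ at $s$ such that $[\ell_{\min}(s,x),\,j-i]\cap\mathcal L(Q,B)\neq\emptyset$.

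The remaining task --- and the part I expect to be the real work --- is to express that intersection condition in FO$(<,+)$. The key point is that the ``slack'' $(j-i)-\ell_{\min}(s,x)$, which equals the maximal number of timestamps deletable from $t_i,\dots,t_j$ while keeping all residual gaps $\le r$, need never be computed exactly: for every fixed $c$, both ``$\ell_{\min}(s,x)\le c$'' and ``slack $\ge c$'' are FO$(<)$ (guess the $\le c$ retained, resp.\ the $c$ deleted, timestamps). If $\ell_{\min}(s,x)<L_0$ (FO$(<)$) and $j-i$ exceeds a fixed bound, or if $\ell_{\min}(s,x)\ge L_0$ and the slack is $\ge d-1$, the interval already hits every residue it needs, so the condition collapses to the fixed fact ``$B$ is reachable from $Q$ in $\A_\Pi$''; and when $j-i$ is below that bound, one simply unrolls $\A_\Pi$ over the boundedly many intermediate timestamps. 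The one genuinely arithmetic case is the \emph{tight} regime $\ell_{\min}(s,x)\ge L_0$ with slack equal to a fixed $c<d-1$: then the interval is $[(j-i)-c,\,j-i]$, lies entirely in the periodic range, and the condition reduces to $\bigvee_{\iota\le c}\bigl((j-i)-\iota\equiv\mu\pmod d$ for some $\mu$ realised by a walk $Q\to B\bigr)$, which is FO$(<,+)$ exactly because PLUS lets us compute $(j-i)\bmod d$, i.e.\ count the $\D$-timestamps in a window modulo the automaton's period. The main obstacle is carrying out this case analysis cleanly --- establishing the interval characterisation, the eventual periodicity of $\mathcal L(Q,B)$, and the fact that bounded slack confines the arithmetic content to that single modular count --- while correctly handling the bracket conventions ($(0,r)$ versus $(0,r]$, $[0,r\rangle$ versus $(0,r\rangle$ in the initial conditions) and the degenerate blocks at the ends of $\tem(\D)$.
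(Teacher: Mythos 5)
Your proposal is correct and follows essentially the same route as the paper: the paper's proof also reduces the question to a run of the metric automaton of Theorem~\ref{thm:automaton} over a chain of timestamps with all gaps below $r$, and combines a PLUS-computed count of timestamps modulo the automaton's cycle length with a bounded ``slack'' contributed by clusters of close points (its formulas $\varphi_{+k}$, $\varphi_{1+2}$, $\varphi_{1+1+1}$ are precisely your fixed-slack cases). The paper only carries this out on a concrete example and leaves the details to the reader, so your interval-of-chain-lengths plus eventually-periodic-walk-lengths formulation is just the general form of the same argument.
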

\begin{proof}
We illustrate the proof by a concrete example. Consider the OMQ $(\Pi, S_1)$ with
$$
\Pi = \{ S_0 \leftarrow B, \ S_1 \leftarrow \diamondminus_{(0, d)} S_0,\ S_2 \leftarrow \diamondminus_{(0, d)} S_1,\ S_3 \leftarrow \diamondminus_{(0, d)} S_2,\ S_1 \leftarrow \diamondminus_{(0, d)} S_3\}.
$$
For such a $\Pi$ the automaton $\A_{\Pi}$ is shown in the picture below on the right-hand side. Using it, we construct the following FO-rewriting  $\Q(x)$ of $(\Pi, S_1)$:
$$
\exists x' \, \big[
B(x') \land \forall y \, \big(  (x' < y \le x) \to \exists y'\, \mathsf{dist}_{< d}(y,y') \land{}
 (\varphi_1(x',x)\lor
\varphi_2(x',x) \lor \varphi_3(x',x) ) \big) \big],
$$
where
\begin{itemize}
\item[--] $\varphi_1(x',x) = (x - x') \in 1 + 3 \mathbb{N}$;
\item[--] $\varphi_2(x',x) = (x - x') \in 2 + 3 \mathbb{N} \land \exists x_1 \, ((x'<x_1\le x) \land \varphi_{+1}(x_1,x'))$;
\item[--] $\varphi_3(x',x) = (x - x') \in 3 + 3 \mathbb{N} \land \varphi_{1+1+1}(x',x)\lor \varphi_{1+2}(x',x)$;
\item[--] $\varphi_{1+2}(x',x) =
\exists x_1\, (( x'<x_1\le x) \land \varphi_{+2}(x_1,x'))$;
\item[--] $\varphi_{1+1+1}(x',x) =
\exists x_1,x_2 \, (( x'<x_1<x_2\le x) \land
\varphi_{+1}(x_1,x')\land \varphi_{+1}(x_2,x')\land{} ((x_2 - x_1) > 1))$;
\item[--] $\varphi_{+k}(z,x') = \mathsf{dist}_{< d}(z,z-k-1)  \land ((z -k-1) \ge x')$, for $k = 1,2$.	
\end{itemize}
Intuitively, to derive $S_1$ at $x$, we need a point $x'$ with
$B(x')$ in the data and a sequence of points $y$
between $x'$ and $x$ without gaps of length
$\ge d$. An example of such a data instance is given below.\\[3pt]
\centerline{\scalebox{0.6}{\includegraphics{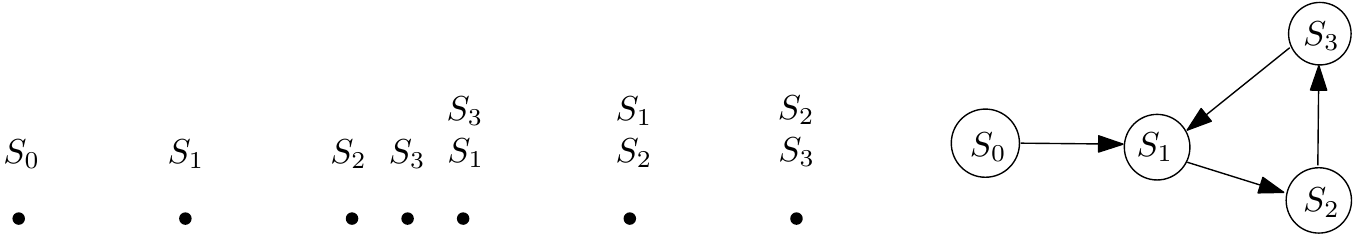}}}

\smallskip
\noindent
Note how we maintain the `stack of states' with the elements at its bottom alternating in a cycle between
$S_1, S_2$, and $S_3$. Note also that the states go in decreasing order when we scan the stack from bottom to top. So we use the formulas $\varphi_k(x',x)$ to express that $S_1$ is inferred at $x$
on level $k$ of the stack. The formula
$\varphi_{+k}(z,x')$ says that the height of the stack increases by $k$ because of a cluster of $k+2$ points within the segment of size $<d$ ending with $z$. The formulas
$\varphi_{1+2}(x',x)$ and $\varphi_{1+1+1}(x',x)$ express two ways of increasing the height of the stack from 1 to 3. It is to be emphasised that properties of $x$ and $ x'$ such as $(x - x') \in 1 + 3 \mathbb{N}$ can be expressed by FO-formulas using the predicate $\text{PLUS}(\textit{num}1, \textit{num}2, \textit{sum})$ or $\text{BIT}(\textit{num},\textit{bit})$, which gives a  binary representation of every object $num$ in the domain of an FO-structure~\cite{Immerman99}, whereas FO with $<$ only is not enough. For example, $(x - x') \in 1 + 3 \mathbb{N}$ is expressed by the formula
$$
\varphi_1(x',x) ~=~  \exists z, z', z'', y\, \big(  (x = y + 1)  \land{}\\ \text{PLUS}(z,z,z') \land{}
\text{PLUS}(z',z,z'') \land
\text{PLUS}(x', z'', y)\big).
$$
We leave further details to the reader.
\end{proof}


\section{OMQs with Punctual Ranges $[r,r]$}\label{sec:punct}

Operators of the form $\diamondminus_{[r,r]}$ resemble the \LTL{} previous time operator $\ominus$. To illustrate an essential difference, consider the program $\Pi = \{\diamondmin_{[1,1]} P \to Q, \   \diamondmin_{[1.5,1.5]} P \land Q \to P\}$
and the data instance $\D$ below. In \LTL, we always derive $\ominus P$ at $n+1$\\[3pt]
\centerline{
\begin{tikzpicture}[nd/.style={draw,thick,circle,inner sep=0pt,minimum size=1.5mm,fill=white},xscale=.3,>=latex]
\draw[thick,->, gray](-1,0) -- (27,0);
\slin{0,0}{$P$}{$0$};
\slin{2,0}{$P$}{$\frac{1}{4}$};
\slin{6,0}{$P$}{$\frac{3}{4}$};
\slin{7,0}{$P$}{$\frac{7}{8}$};
\slin{13,0}{\textcolor{gray}{$P\ Q$}}{$\frac{7}{4}$};
\slin{15,0}{\textcolor{gray}{$Q$}}{$\frac{15}{8}$};
%
\slin{24,0}{}{$3$};
%
\slin{26,0}{$\textcolor{gray}{P} \ Q$}{$\frac{13}{4}$};
\end{tikzpicture}}\\
%
if $P$ holds at $n$. In our example, $P$ at $3/4$ implies $Q$ at $7/4$, which together with $P$ at $1/4$ imply $P$ at $7/4$, and eventually the latter $P$ with $Q$ at $13/4$ implies $P$ at $13/4$; independently, $P$ at $7/8$ implies $Q$ at $15/8$.


\begin{theorem}\label{punc}
\MTL-OMQs with temporal operators of the form $\diamondmin_{[r,r]}$ and $\boxminus_{[r,r]}$ only are \textup{FO(RPR)}-rewritable; answering such OMQs is \NCo-complete for data complexity.
\end{theorem}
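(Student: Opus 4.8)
The plan is to prove the upper bound (\textup{FO(RPR)}-rewritability, which by the excerpt already gives the \NCo{} membership half) by a normalisation turning the punctual-operator fragment into a one-dimensional constraint problem solved by a \emph{fixed} finite automaton, and to observe that running such an automaton over the input word is an iterated product in a finite monoid, hence in \NCo{} and expressible by relational primitive recursion; for the lower bound I would reduce from an \NCo-complete iterated-product problem.

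\textbf{Normalisation and locality.} Let $r_1,\dots,r_m$ be the finitely many (fixed) ranges in $\q=(\Pi,A)$ and let $g$ be a dyadic rational such that each $r_i=c_ig$ with $c_i\in\mathbb N$ (take $g$ to be the gcd of the $r_i$); put $C:=\max_i c_i$. A punctual operator $\diamondmin_{[r_i,r_i]}P$ or $\boxminus_{[r_i,r_i]}P$ at a timestamp $t$ refers only to the timestamp $t-r_i$, if present, so all temporal interaction in any model is confined to the classes of $\tem(\D)$ under $\bar s\equiv\bar t$ iff $\bar s-\bar t\in g\mathbb Z$ (non-temporal rules act pointwise within a timestamp). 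Using $\mathsf{bit}_{\it in},\mathsf{bit}_{\it fr}$, these classes and the integer position $(\bar s-\bar s_{\min})/g$ of $s$ inside its class are FO-definable over the structure~\eqref{struc}. Models of $\D,\Pi$ are exactly tuples of models of the individual classes, so $\D,\Pi$ are inconsistent iff some class is, and, if consistent, $t$ is a certain answer iff $A$ holds at $t$ in every model of $t$'s class. Indexing a class by consecutive integers, it becomes a finite word with ``gaps'' (absent indices), on which $\diamondmin_{[r_i,r_i]}$ and $\boxminus_{[r_i,r_i]}$ are jumps by the constant $c_i$.

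\textbf{Automaton and rewriting.} Over such a word I would run a nondeterministic finite automaton $\mathcal B_\Pi$ whose state records, for the last $C$ index-slots, whether the slot is an absent index or a timestamp together with its type (a maximal $\Pi$-consistent set of literals), plus the ``obligations'' imposed on the next $C$ slots by head-occurrences $\diamondmin_{[r_i,r_i]}P$ and $\boxminus_{[r_i,r_i]}P$ — all a bounded amount of information; a transition nondeterministically appends the next slot (absent, or a timestamp with a type consistent with $\Pi$, the current window and the pending obligations). Then a class has a model iff $\mathcal B_\Pi$ has an accepting run over the corresponding word, and, for consistent $\D$, $t$ fails to be a certain answer iff $\mathcal B_\Pi$ has an accepting run whose $t$-slot carries a type containing $\neg A$. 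Non-emptiness of a fixed NFA over a given word is an iterated product of fixed Boolean transition matrices, which is in \NCo; the matrix at each slot is FO-definable from $\D$, so the full product is definable by relational primitive recursion along the order of $\tem(\D)$ (copying the accumulator at indices outside $t$'s class, and, in a second copy, forcing $\neg A$ at the $t$-slot). The rewriting $\Q(x)$ is then an FO sentence over these two RPR-defined accumulators: ``the $x$-class product has empty accepting entry'' (inconsistency) or ``the $\neg A$-forced product does'' (no model with $\neg A$ at $x$). Alternatively one may just invoke $\textup{FO(RPR)}=\NCo$ \cite{DBLP:journals/iandc/ComptonL90} once the problem is placed in uniform \NCo.

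\textbf{Lower bound and main obstacle.} For \NCo-hardness I would reduce from the word problem of a fixed non-solvable finite group, say $S_5$: given $\pi_1,\dots,\pi_n\in S_5$, build a data instance on integer timestamps $0,\dots,n$ encoding $\pi_i$ by a fixed block of atoms at $i$; the program, using disjunction and disjointness together with $\diamondmin_{[1,1]}$, forces at each timestamp exactly one state atom $q_\sigma$ ($\sigma\in S_5$), initialises it to the identity, and propagates it one step by rules $q_\sigma\wedge(\text{letter }\pi)\wedge\dots\to q_{\sigma\pi}$; the query atom is made certain at $n$ exactly when $\pi_1\cdots\pi_n=e$. The reduction is \ACz, so the problem is \NCo-hard, hence \NCo-complete. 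The delicate points I expect are (i) designing $\mathcal B_\Pi$ so that it faithfully captures \emph{all} models despite disjunctive heads and head-occurrences of temporal operators (the forward ``obligations''), so that ``certain answer'' is genuinely the complement of ``some accepting run puts $\neg A$ at $t$''; and (ii) turning the iterated-product argument into an honest \textup{FO(RPR)} rewriting over~\eqref{struc} — defining the per-class integer indexing and running the recursion along the global order while restricting attention to $t$'s class. The lower bound is comparatively routine once the $S_5$ encoding is set up.
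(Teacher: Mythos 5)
Your proposal follows essentially the same route as the paper's proof: decompose the timeline into residue classes modulo the gcd $g$ of the (finitely many) punctual ranges, track a bounded window of types along each class, realise the resulting finite-automaton run by simultaneous relational primitive recursion over the order of $\tem(\D)$, and obtain \NCo-hardness from a Barrington-style group-word encoding of $\ominus$ via $\diamondmin_{[1,1]}$ (the paper simply cites the corresponding \LTL{} result). One correction: deciding whether two timestamps lie in the same class, i.e.\ whether $\bar s - \bar t$ is divisible by $g$, is \emph{not} FO-definable from $\mathsf{bit}_{\it in},\mathsf{bit}_{\it fr}$ (it requires modular counting, which is outside \ACz), so it must itself be expressed by an FO(RPR) subformula simulating a divisibility automaton --- exactly what the paper's $\mathsf{div}_{\boldsymbol{1}}$ does in Appendix~\ref{sec:div} --- which folds into your recursion without otherwise changing the argument.
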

\begin{proof}
\NCo-hardness is proved by reduction of \hMTL-OMQs with rules of the form \mbox{$\ominus P \land P' \to Q$}, answering which is NC$^1$-complete~\cite{DBLP:conf/ijcai/ArtaleKKRWZ15}.

To show FO(RPR)-rewritability of a given OMQ $\q=(\Pi,A)$, we assume w.l.o.g.\ that $\Pi$ does not contain ranges $[0,0]$.
Let $R_\Pi$ be the set of  numbers occurring as endpoints of ranges in $\Pi$. We set $\boldsymbol{1} = \text{gcd}(R_\Pi)$, \mbox{$\boldsymbol{n} = \boldsymbol{1}\cdot n$}, for $n \in \mathbb N$, $\boldsymbol{m} = \max (R_\Pi )$. Thus, in our example above, $\boldsymbol{1} = 1/2$, $\boldsymbol{2} = 1$, $\boldsymbol{3} = 3/2$.
We define ${\it cl}(\Pi)$ to be the set of simple and temporal literals with atoms from $\Pi$ and operators $\diamondmin_{\boldsymbol{i}}$ such that $\boldsymbol{i} \in \{\boldsymbol{1}, \dots, \boldsymbol{n}\}$
 and $\diamondmin_{\boldsymbol{n}}$ occurs in $\Pi$. By a \emph{type} $\types$ for $\Pi$ we now mean any maximal subset of ${\it cl}(\Pi)$ consistent with $\Pi$. For types $\types$, $\types'$ and $\boldsymbol{i} \in \{\boldsymbol{1}, \dots, \boldsymbol{m}\}$, we write $\types \rightarrow_{\boldsymbol{i}} \types'$ if
\begin{itemize}
\item[--] $\sigma \in \types$ iff $\diamondmin_{\boldsymbol{i}} \sigma \in \types'$, for any $\diamondmin_{\boldsymbol{i}} \sigma \in {\it cl}(\Pi)$;
\item[--] $\diamondmin_{\boldsymbol{j}} \sigma \in \types$ iff $\diamondmin_{\boldsymbol{j+i}} \sigma \in \types'$, for $\diamondmin_{\boldsymbol{j+i}} \sigma \in {\it cl}(\Pi)$, $\boldsymbol{j} \ge \boldsymbol{1}$.
%
%
%
\end{itemize}
%
We say that $(\types_0, t_0), \dots, (\types_n, t_n)$ is a \emph{run from $t_0$ to $t_n$} on a data instance $\D$ of the form~\eqref{struc} if $t_i \in \tem(\D)$, for $i \le n$, and
%
\begin{itemize}
\item[--] $\{P \in \varSigma_\Pi \mid t_{0} \in P^\D\} \subseteq \types_{0}$;

\item[--] $\neg \diamondmin_{\boldsymbol{j}} \sigma \in \types_0$ for all $\diamondmin_{\boldsymbol{j}} \sigma \in cl(\Pi)$;

\item[--] $\bar t_{i+1} - \bar t_i \in \{\boldsymbol{1}, \dots, \boldsymbol{m}\}$ and if $t_{i+1} > t > t_i$ then $\bar t - \bar t_i \not \in \{\boldsymbol{1}, \dots, \boldsymbol{m}\}$, for any $t \in \tem(\D)$;

\item[--] $\types_i \rightarrow_{(\bar t_{i+1}-\bar t_i)} \types_{i+1}$ and $\{P \! \in\! \varSigma_\Pi \mid t_{i+1} \in P^\D\} \subseteq \types_{i+1}$.
\end{itemize}
Call $t \in \tem(\D)$ \emph{initial} if $\bar t - \bar t' \not \in \{\boldsymbol{1}, \dots, \boldsymbol{m}\}$, for all $t' \in \tem(\D)$. The next lemma follows directly from the given definitions:
\begin{lemma}\label{run}
$(i)$ $(\Pi, \D)$ is consistent iff, for every $t \in \tem(\D)$, there exists a run on $\D$ from some initial $t' \leq t$ to $t$;
$(ii)$ A timestamp $t \in \tem(\D)$ is not a certain answer to $\q$ over $\D$ iff $(\Pi, \D)$ is consistent and
there is a run $(\types_0, t_0), \dots, (\types_n, t_n)$ from initial $t_0$ to $t = t_n$ on $\D$ and $\neg A \in \types_n$.
\end{lemma}
We first show how to express the existence of a run from $x$ to $y$ specified in $(ii)$ by an FO(RPR)-formula $\mathsf{run}_\q(x,y)$ over $\D$. First, as divisibility of binary integers by a given number is recognisable by a finite automaton, we can define an FO(RPR)-formula $\mathsf{div}_{\boldsymbol{1}}(u,v)$ that is true iff $\bar u - \bar v = n \boldsymbol{1}$, for some $n \in \mathbb N$ (see Appendix~\ref{sec:div}). We also have an FO-formula $\mathsf{last}_{\boldsymbol{i}}(u)$ saying that $\boldsymbol{i}$ is minimal among $\{\boldsymbol{1},\dots,\boldsymbol{m}\}$ with $\bar u - \boldsymbol{i} = \bar v$, for some $v \in \tem(\D)$.
Let $Q = \{\types_1,\dots,\types_n\}$ be the set of all types for $\Pi$, and let $Q_0 \subseteq Q$ comprise $\types$ with $\neg \diamondmin_{\boldsymbol{j}} \sigma \in \types$, for all $\diamondmin_{\boldsymbol{j}} \sigma \in {\it cl}(\Pi)$. We define $\mathsf{run}_\q(x,y)$ as the FO(RPR)-formula
\begin{align*}
\left[ \begin{array}{l}
R_{\types_1}(x,z) \equiv \vartheta_{\types_1}\\
\dots\\
R_{\types_n}(x,z) \equiv \vartheta_{\types_n}
\end{array}\right] \bigvee_{\neg A \in\types \in Q} R_{\types}(x,y) ~\land~
\mathsf{div}_{\boldsymbol{1}}(y,x),
\end{align*}
where $R_{\types}(x,z)$, for $\types \in Q$, is a \emph{relation variable}
 %
%
and the formula
$\vartheta_\types(x,z,R_{\types_1}(x,z-1),\dots,R_{\types_n}(x,z-1))$ is a disjunction of the three formulas below if $\types \in Q_0$ and a disjunction of the last two of them if $\types \notin Q_0$:
\begin{align*}
&  (x=z) \land \delta_\types(z) ,\\
&  \neg \mathsf{div}_{\boldsymbol{1}}(z,x)\hspace*{-0.5mm} \land \exists z'\hspace*{-0.5mm} (\mathsf{dist}_{< \boldsymbol{m}}(z,z') \hspace*{-0.5mm}\land\hspace*{-0.5mm} \mathsf{div}_{\boldsymbol{1}}(z',x))\hspace*{-0.5mm} \land \hspace*{-0.5mm}
  R_{\types}(x,z-1),\\
&  \mathsf{div}_{\boldsymbol{1}}(z,x)  \land \hspace*{-2mm} \bigvee_{\substack{\boldsymbol{i} \in \{\boldsymbol{1}, \dots, \boldsymbol{m}\}\\ \types' \rightarrow_{\boldsymbol{i}} \types}}(\delta_\types(z) \land \mathsf{last}_{\boldsymbol{i}}(z) \land R_{\types'}(x,z-1)),
\end{align*}
where $z - 1$ is the immediate predecessor of $z$ in $\tem(\D)$.

To illustrate, in the context of the example above, the formulas $R_\types \equiv \vartheta_\types$ say that $R_{\types}(1/4, 1/4)$ holds for the types
$$
\{\!\neg \diamondmin_{\boldsymbol{1}} P,\! \neg \diamondmin_{\boldsymbol{2}} P, \neg \diamondmin_{\boldsymbol{3}} P, P, Q\},
\{\!\neg \diamondmin_{\boldsymbol{1}} P,\! \neg \diamondmin_{\boldsymbol{2}} P, \neg \diamondmin_{\boldsymbol{3}} P, P, \neg Q\}.
$$
Then $R_{\types}(1/4, 3/4)$ holds for
$$
\{ \diamondmin_{\boldsymbol{1}} P, \neg \diamondmin_{\boldsymbol{2}} P, \neg \diamondmin_{\boldsymbol{3}} P, P, Q\}, \
\{ \diamondmin_{\boldsymbol{1}} P, \neg \diamondmin_{\boldsymbol{2}} P, \neg \diamondmin_{\boldsymbol{3}} P, P, \neg Q\},
$$
$R_{\types}(1/4, 7/8)$ for the same $\types$ as $R_{\types}(1/4, 3/4)$, $R_{\types}(1/4, 7/4)$ for $\types = \{ \neg \diamondmin_{\boldsymbol{1}} P,  \diamondmin_{\boldsymbol{2}} P,  \diamondmin_{\boldsymbol{3}} P, P, Q\}$, and so on.

Thus, we obtain the following FO(RPR)-rewriting of $\q$
$$
\neg \Phi_\Pi \lor \neg \exists y \, \big(\mathsf{run}_\q(y,x) \land \bigwedge_{\boldsymbol{i} \in \{\boldsymbol{1}, \dots, \boldsymbol{m}\}} \neg \mathsf{last}_{\boldsymbol{i}}(y) \big) ,
$$
where $\Phi_\Pi$ checks the consistency condition of Lemma~\ref{run}
$(i)$ and can be constructed similarly to $\mathsf{run}_\q$.
\end{proof}


\section{OMQs with Non-Punctual Ranges}

Unlike the proof of Theorem~\ref{punc}, where the derived facts at $t$ were determined by the data $\D$ at $t$ and the derived facts at the nearest $t'\in \tem(\D)$ with $\bar t' = \bar t - \boldsymbol{i}$, for non-punctual ranges the derived facts at $t$ depend on an unbounded number of timestamps $t' < t$. In the proof of Theorem~\ref{non-punc} below, we show that to construct derivations in this case, we can actually keep track of a fixed number (depending only on the given OMQ)  of moments $t_P' < t$ where each $P$ was derived.


\begin{theorem}\label{non-punc}
$(i)$ \MTL-OMQs whose operators $\diamondmin_\range$ and $\boxminus_\range$ have non-punctual $\range$ are \emph{FO(TC)}-rewritable; answering them is in $\NL$ and $\NCo$-hard;
$(ii)$ \hMTL-OMQs of this kind are \emph{FO(DTC)}-rewritable; answering them is in \emph{L} and $\NCo$-hard.
\end{theorem}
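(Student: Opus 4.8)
The plan is to derive the rewritings from the complexity bounds via descriptive complexity: on ordered structures FO(TC) captures \NL{} and FO(DTC) captures \LogSpace{}~\cite{Immerman99}, so it suffices to place the OMQ answering problem for $(i)$ in \NL{} and that for $(ii)$ in \LogSpace{}, after which the FO(TC)- resp.\ FO(DTC)-rewritings follow; to express ``$t$ is a certain answer'' (that is, the \emph{complement} of ``some model falsifies $A$ at $t$'', plus the ``$\Pi$ inconsistent with $\D$'' disjunct) I would use closure of \NL{} under complementation, much as the rewriting $\neg\Phi_{\neg A}(x)$ was obtained in the proof of Theorem~\ref{aco}. The \NCo-hardness common to $(i)$ and $(ii)$ comes from a reduction of the \NCo-complete problem of answering \hMTL-OMQs with rules $\ominus P\land P'\to Q$ over \LTL-words, already used in Theorem~\ref{punc}~\cite{DBLP:conf/ijcai/ArtaleKKRWZ15}: put the $n$-th letter of the word at the integer timestamp $n$ and use no other timestamps, so that on such data the non-punctual operator $\diamondmin_{(0,1]}$ sees exactly the immediate predecessor and hence mimics $\ominus$ (and is vacuously false at timestamp $0$); replacing each $\ominus$ by $\diamondmin_{(0,1]}$ yields a $\cMTL\!^\diamondminus$-OMQ with the single non-punctual range $(0,1]$ and the same certain answers.

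For the upper bounds the key is a ``bounded-memory'' lemma. Fix $\q=(\Pi,A)$ and let $M$ be the largest finite endpoint of a range in $\Pi$. Since $\diamondmin_\range$ and $\boxminus_\range$ only look into the past, $\C_{\Pi,\D}$ (Theorem~\ref{thm:canon}) and, more generally, any model can be produced by a single left-to-right sweep over $\tem(\D)=\{t_0<\dots<t_k\}$: at $t_i$ one takes the least fixpoint (for \hMTL) or an admissible completion (for \MTL) of the rules restricted to $t_i$, given the facts already fixed at $t_0,\dots,t_{i-1}$; as this local fixpoint stabilises in a bounded number of rounds it is FO-definable. I claim that to continue such a sweep it suffices to keep a \emph{configuration} at $t_i$ consisting, for each atom $P$ and each endpoint $r$ occurring in $\Pi$, of $O(1)$ pointers into $\tem(\D)$ — the latest timestamp at distance $\ge r$ from $t_i$ at which $P$, resp.\ $\neg P$, holds — together with the truth values at the last $O(1)$ timestamps. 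Using non-punctuality one shows that from such a configuration one can $(a)$ decide every $\diamondmin_\range P$ and $\boxminus_\range P$ at $t_{i+1}$ (for $\boxminus_{\langle r,s\rangle}P$ it is enough that the latest $\neg P$-point at distance $\ge r$ lie strictly before $\bar t_{i+1}-s$, and dually for $\diamondmin$), hence compute the admissible fact sets/types at $t_{i+1}$, and $(b)$ update the configuration to $t_{i+1}$ by an FO-formula over $\D$, the case $\bar t_{i+1}-\bar t_i>M$ simply resetting all pointers. Granted the lemma, OMQ answering becomes a reachability question over configurations: in the \hMTL{} case the successor configuration is \emph{functionally} determined (since $\C_{\Pi,\D}$ is unique), so the reachable configuration at $t$ is traced deterministically in \LogSpace, giving FO(DTC)-rewritability and also handling the consistency test of Theorem~\ref{thm:canon}$(i)$; in the \MTL{} case the successor is nondeterministic, giving an \NL{} search for a model with $\neg A$ at $t$, and FO(TC)-rewritability of the certain-answer problem follows by complementation.

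The main obstacle is the bounded-memory lemma itself, specifically arguing that $O(1)$ pointers per atom really suffice and, above all, that the update in $(b)$ is \emph{local} — computable from the configuration at $t_i$ and $\D$ alone, without revisiting the derived facts at intermediate timestamps. This is exactly where non-punctuality is used: when a range $\langle r,s\rangle$ has positive length $s-r>0$, a single occurrence of $P$ (or of $\neg P$) witnesses (resp.\ blocks) an entire window, so only the most recent occurrence at each relevant look-back distance matters, and any window lying more than $M$ beyond a data gap contains no timestamps at all; by contrast, for punctual $[r,r]$ one needs the precise truth value at distance exactly $r$, which is why Theorem~\ref{punc} had to resort to FO(RPR). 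Formalising this — pinning down the exact set of pointers, proving the configuration is a sufficient statistic for continuing every model, and checking that the transition and the consistency test are FO over $\D$ — is the bulk of the work; turning the resulting automaton-style search into an FO(TC)- resp.\ FO(DTC)-formula is then routine via~\cite{Immerman99}.
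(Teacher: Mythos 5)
Your overall architecture --- a left-to-right sweep maintaining a bounded configuration of pointers into $\tem(\D)$, reduction to reachability over configurations (nondeterministic for \MTL{}, functional for \hMTL{} via the canonical model of Theorem~\ref{thm:canon}), and FO(TC)/FO(DTC) via capturing on ordered structures --- has exactly the shape of the paper's proof, and your $\NCo$-hardness reduction (letters at integer timestamps, $\ominus$ simulated by $\diamondmin_{(0,1]}$) is the paper's. That the paper writes the FO(TC)/FO(DTC) formulas explicitly rather than invoking the capturing theorem is cosmetic.

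The genuine gap is the bounded-memory lemma, which you rightly call ``the bulk of the work'' but whose configuration, as you specify it, does not work. You store, for each atom $P$ and each endpoint $r$, the latest timestamp at distance $\ge r$ from $t_i$ at which $P$ (resp.\ $\neg P$) holds, plus truth values at the last $O(1)$ timestamps. This suffices to \emph{evaluate} the temporal literals at $t_i$, but it cannot be \emph{updated} locally: the latest $P$-point at distance $\ge r$ from $t_{i+1}$ may be any timestamp in the window $(\bar t_i - r,\, \bar t_{i+1}-r]$, of which there can be arbitrarily many (data may be dense), and whether $P$ holds at those timestamps is a derived fact recorded nowhere in your configuration; so step $(b)$ of your lemma fails for this data structure. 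The paper's device is structurally different: for each $\sigma$ it keeps a \emph{trace} of $\ell_\sigma+1$ \emph{intervals} $[u_j,s_j]$ with $\ell_\sigma=\lceil|\range_\sigma^+|/|\range_\sigma^-|\rceil$, each interval being a maximal run of witnesses for $\sigma$ in which consecutive witnesses are within $|\range_\sigma^-|$ of one another (plus the very first witness). Such an interval acts as a single block for evaluating $\diamondmin_\range\sigma$ (any lookback window of length $|\range|$ meeting it contains a witness), at most $\ell_\sigma$ of them can meet the zone $\range_\sigma^+$, and --- crucially --- the trace is updated at the moment a new witness is \emph{derived} (extend the last interval if $\bar t_i-\bar s_{\ell_\sigma}\in\range_\sigma^-$, else open a new one and shift the oldest out), not at the moment a point slides past distance $r$; this is what makes the transition FO-definable from $\D$ and the previous configuration alone, and hence the sweep implementable in $\NL$ (resp.\ $\LogSpace$). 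Without this or an equivalent sufficient statistic, your upper bounds and rewritings are not established.
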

\begin{proof}
In both cases, $\NCo$-hardness can be established as in the proof of Theorem~\ref{punc} by encoding $\ominus$ with $\diamondmin_{(0, 1]}$.

$(i)$ Let $\q = (\Pi, A)$ be the given OMQ.
For $\range = \langle r, q \rangle$ with $q \ne\infty$, let $\range^- = \langle 0, q-r \rangle$ and $\range^+ = \langle 0, q\rangle$; if $q = \infty$, $\range^-$ and $\range^+$ are undefined. Let $\Sigma_\Pi$ be the set of all $\sigma$ with $\diamondmin_\range\sigma$  in $\Pi$, for some $\range$. For $\sigma \in \Sigma_\Pi$, let $\range_\sigma^-$ ($\range_\sigma^+$) be the intersection (union) of the defined $\range^-$ ($\range^+$) with $ \diamondmin_\range\sigma$ in $\Pi$; if there are no such $\diamondmin_\range\sigma$, $\range_\sigma^-$ and $\range_\sigma^+$ are undefined.
To illustrate, consider the \hMTL-program $\Pi$ with the rules
$$
 \diamondmin_{(2,4]} P \to P, \quad \diamondmin_{[1,2)} P \to P, \quad \diamondmin_{[3,\infty]} Q \to Q.
$$
 Then $\range_P^- = (0,1)$, $\range_P^+ = [0,4]$, and $\range_Q^-$, $\range_Q^+$ are undefined.

For a data instance $\D$, a \emph{trace of length $\ell$} for $t \in \tem(\D)$ is a sequence of intervals $[u_0, s_0], \dots, [u_\ell, s_\ell]$ where either $[u_i,s_i] = [*,*]$ (meaning that this interval is undefined) or  $u_i, s_i \in \tem(\D)$,
$u_0 = s_0$, and $u_1 \leq s_1 < u_2 \leq s_2 < \dots < u_\ell \leq s_\ell \leq t$, assuming that $* < u$, for any $u$.
Thus, for the data instance $\D$ below,\\
\centerline{
\begin{tikzpicture}[nd/.style={draw,thick,circle,inner sep=0pt,minimum size=1.5mm,fill=white},xscale=1.1,>=latex]
\draw[thick,->, gray](-0.5,0) -- (7,0);
\slin{-0.2,0}{$P$}{$\frac{1}{2}$};
\slin{0.3,0}{$P$}{$\frac{5}{4}$};
\slin{1.5,0}{$Q$}{$\frac{5}{2}$};
\slin{2.5,0}{}{$\frac{15}{4}$};
\slin{3.2,0}{}{$5$};
\slin{3.6,0}{}{$\frac{25}{4}$};
\slin{6.5,0}{}{$10$};
\end{tikzpicture}}
$([{\textstyle \frac{1}{2}},{\textstyle \frac{1}{2}}],[*,*],[*,*],[{\textstyle \frac{1}{2}},{\textstyle \frac{5}{4}]},[{\textstyle \frac{5}{2}},{\textstyle \frac{5}{2}}])$ is a trace for $t=5/2$. Intuitively, such a trace stores the most recent $\ell$ intervals preceding $t$ where a simple literal holds at some point, with $[u_0, s_0]$ storing the very first point where the literal holds.
A tuple $(\boldsymbol{t}, (\boldsymbol{tr}_\sigma)_{\sigma \in \Sigma_\Pi}, t)$ is an \emph{extended type} for $t \in \tem(\D)$ if
\begin{itemize}
\item[--]  $\boldsymbol{t}$ is a \emph{type} for $\Pi$ (as in the proof of Theorem~\ref{aco});

\item[--] $\boldsymbol{tr}_\sigma$ is a trace for $t$ of length $\ell_\sigma = \lceil |\range_\sigma^+| /|\range_\sigma^-| \rceil$, where $|\range_\sigma^+|$ and $|\range_\sigma^-|$ denote the end-points of these intervals; if one of the intervals is undefined, $\ell_\sigma = 0$;

\item[--] $\diamondmin_\range \sigma \in \boldsymbol{t}$ iff $\mathsf{int}_\range(t, u_i, s_i)$, for some $[u_i, s_i]$ in  $\boldsymbol{tr}_\sigma$,
 \end{itemize}
 where $\mathsf{int}_\range(t, u, s)$ is true iff $\{\bar t - k \mid k \in \range\} \cap [\bar u, \bar s] \neq \emptyset$ and $u,s \ne *$.
In our example, $\ell_P = 4$, $\ell_Q = 0$, and
the following triples $(\boldsymbol{t}_i, (\boldsymbol{tr}^i_\sigma)_{\sigma \in \Sigma_\Pi}, t_i)$ are extended types for $t_i$:
\begin{align*}
& \type_0=\{ P, \neg Q, \neg\diamondmin_{(2,4]}P, \ \neg\diamondmin_{[1,2)}P, \neg\diamondmin_{[3,\infty)}Q \},\ t_0 = {\textstyle \frac{1}{2}}, \
\trace^0_P = ([{\textstyle \frac{1}{2}},{\textstyle \frac{1}{2}}],[*,*],[*,*],[*,*],[{\textstyle \frac{1}{2}},{\textstyle \frac{1}{2}]}),\ \trace^0_Q = ([*,*]);\\
& \type_1= \{P,\neg Q,\neg\diamondmin_{(2,4]}P,\diamondmin_{[1,2)}P, \ \neg\diamondmin_{[3,\infty)}Q \}, t_1 = {\textstyle \frac{5}{4}}, \
  \trace^1_P = ([{\textstyle \frac{1}{2}},{\textstyle \frac{1}{2}}],[*,*],[*,*],[*,*],[{\textstyle \frac{1}{2}},{\textstyle \frac{5}{4}]}),\ \trace^1_Q = ([*,*]);\\
& \type_2 = \{P, Q, \diamondmin_{(2,4]}P, \ \neg \diamondmin_{[1,2)}P, \ \neg\diamondmin_{[3,\infty)}Q \}, \ t_2 = {\textstyle \frac{5}{2}}, \
  \trace^2_P=([{\textstyle \frac{1}{2}},{\textstyle \frac{1}{2}}],[*,*],[*,*],[{\textstyle \frac{1}{2}},{\textstyle \frac{5}{4}]},[{\textstyle \frac{5}{2}},{\textstyle \frac{5}{2}]} ),\
\trace^2_Q=([{\textstyle \frac{5}{2}},{\textstyle \frac{5}{2}]});\\
& \dots\\
%
& \type_5 = \{P, Q, \diamondmin_{(2,4]}P, \diamondmin_{[1,2)}P, \ \neg \diamondmin_{[3,\infty)}Q \}, \ t_5 = {\textstyle \frac{25}{4}}, \
 \trace^5_P=([{\textstyle \frac{1}{2}},{\textstyle \frac{1}{2}}],[{\textstyle \frac{5}{2}},{\textstyle \frac{5}{2}}],[{\textstyle \frac{15}{4}},{\textstyle \frac{15}{4}}],[5,5],[{\textstyle \frac{25}{4}},{\textstyle \frac{25}{4}]} ), \ \trace^5_Q = ([{\textstyle \frac{5}{2}},{\textstyle \frac{5}{2}]}).
\end{align*}
Intuitively, an extended type records the simple and temporal literals that hold at $t$ (the type $\type$) and also some history of the validity of $\sigma$ (the traces) justifying the presence of $\diamondmin_\range \sigma$ in $\type$. As follows from Lemma~\ref{char3} below, to make correct derivations, this history should keep $\ell_\sigma+1$ intervals. Note that this bound does not apply if punctual intervals are present in $\Pi$, which explains the increase of complexity in Theorem~\ref{start-complexity}.


\begin{lemma}\label{char3}
Let $t_0 < \dots < t_m$ be all the timestamps in $\D$. Then $\Pi$ and $\D$ are consistent iff there exists a sequence $(\boldsymbol{t}_i, (\boldsymbol{tr}^i_\sigma)_{\sigma \in \Sigma_\Pi}, t_i)$ of extended types for $t_i$,
$0 \leq i \leq m$, satisfying the following conditions for $\sigma \in \Sigma_\Pi$\textup{:}
\begin{itemize}
\item[--] $\{P \in \varSigma_\Pi \mid t_{i} \in P^\D\} \subseteq \type_{i}$;
\item[--] if $\sigma \notin \boldsymbol{t}_0$, all $[u_j, s_j]$ in $\boldsymbol{tr}^0_\sigma$ are $[*, *]$; if $\sigma \in \boldsymbol{t}_0$, then $[u_0, s_0] = [u_{\ell_\sigma}, s_{\ell_\sigma}] = [t_0, t_0]$ and $[u_j, s_j] = [*, *]$ for $0 < j < \ell_\sigma$\textup{;}
%
\item[--] if $\sigma \not \in \boldsymbol{t}_i$ and $i >0$, then $\boldsymbol{tr}^i_\sigma = \boldsymbol{tr}^{i-1}_\sigma$\textup{;} if $\sigma \in \boldsymbol{t}_i$, $\boldsymbol{tr}^{i-1}_\sigma = ([u_0, s_0], \dots, [u_{\ell_\sigma}, s_{\ell_\sigma}])$ and $[u,s] = [u_0, s_0]$ when $u_0 \neq *$ and $[u,s] = [t_i, t_i]$ otherwise, then
$\boldsymbol{tr}^i_\sigma = ([u, s], [u_1, s_1], \dots, [u_{\ell_\sigma}, t_i])$ if $\bar t_i - \bar s_{\ell_\sigma} \in \range^-_\sigma$, else
          $\boldsymbol{tr}^i_\sigma = ([u, s], [u_2, s_2], \dots, [u_{\ell_\sigma}, s_{\ell_\sigma}], [t_i, t_i])$.
\end{itemize}
\end{lemma}
\begin{proof}
   $(\Rightarrow)$ For a model $\I$ of $\Pi$ and $\D$, we define $(\boldsymbol{t}_i, (\boldsymbol{tr}^i_\sigma)_{\sigma \in \Sigma_\Pi}, t_i)$ with $\boldsymbol{t}_i = \boldsymbol{t}(t_i)$ as follows.  If there is a minimal \mbox{$t_j \leq t_i$} with $\sigma \in \type(t_j)$, we set $[u_0, s_0]=[t_j, t_j]$ in $\boldsymbol{tr}^i_\sigma$; otherwise $[u_0, s_0]= [*,*]$. Consider a  maximal interval $[u,s]$, $s \leq t_i$, such that $\sigma \in \type(u) \cap \type(s)$  and, for any $t \in [u,s)$, there is $t' \in \tem(\D)$ with $\sigma \in \type(t')$  and $\bar t' - \bar t \in \range_\sigma^-$. Suppose there are $k$ such intervals. Let $k' = \min \{k, \ell_\sigma\}$. We define $\boldsymbol{tr}^i_\sigma$ by making its last intervals equal to $[u_{\ell_\sigma-k'+1}, s_{\ell_\sigma-k'+1}], \dots, [u_{\ell_\sigma}, s_{\ell_\sigma}]$, making its $0$-th interval equal to $[u_0, s_0]$, and  making all the remaining intervals equal to $[*,*]$. One can check that $(\boldsymbol{t}_i, (\boldsymbol{tr}^i_\sigma)_{\sigma \in \Sigma_\Pi}, t_i)$ are as required.

   ($\Leftarrow$) Given a sequence  $(\boldsymbol{t}_i, (\boldsymbol{tr}^i_\sigma)_{\sigma \in \Sigma_\Pi}, t_i)$, $1 \leq i \leq m$, we construct an interpretation $\I$ by making $\sigma$ true at $t_i$ in $\I$ iff $\sigma \in \type_i$ for each simple literal $\sigma$ from $\q$. The conditions on these extended types ensure that $\I$ is a model of $\Pi$ and $\D$.
\end{proof}

We use the characterisation of Lemma~\ref{char3} to construct an FO(TC)-sentence $\Phi_\Pi$ that is true in $\D$ iff $\Pi$ and $\D$ are consistent, for any data instance $\D$. $\Phi_\Pi$ contains tuples of variables $\boldsymbol{x} = \boldsymbol{x}_{\sigma_1}, \dots, \boldsymbol{x}_{\sigma_n}$, for $\{ \sigma_1, \dots, \sigma_n\} = \Sigma_\Pi$, where $\boldsymbol{x}_{\sigma} = \boldsymbol{x}_{0\sigma}, \dots, \boldsymbol{x}_{\ell_\sigma \sigma}$ and $\boldsymbol{x}_{i\sigma} = u_{i\sigma}, s_{i\sigma}$ for intervals in traces $\trace_\sigma$;
$\boldsymbol{x}'$ is the same as $\boldsymbol{x}$ but with primed variables:
$$
\Phi_\Pi ~=~ \exists \boldsymbol{x}, \boldsymbol{x}'\,(\bigvee_{\type \text{ type for $\q$}} \mathsf{first}_\type(\boldsymbol{x}) \land{} [\text{TC}_{t, \boldsymbol{x}, t', \boldsymbol{x}'}\xi(t, \boldsymbol{x}, t', \boldsymbol{x}')](\min, \boldsymbol{x}, \max, \boldsymbol{x}')).
$$
Here, $\mathsf{first}_\type(\boldsymbol{x})$ is an FO-formula saying that $\type$ holds in the first timestamp ($\min$) of $\D$ and $\boldsymbol{x}$ represents $\boldsymbol{tr}^0_\sigma$ for all $\sigma$ by encoding $[*,*]$ as the empty interval $[\max,\min]$. The formula $\xi(t, \boldsymbol{x}, t', \boldsymbol{x}')$ under the transitive closure TC says that there is an extended type for $t$ with the trace given by $\boldsymbol{x}$, that $t'$ is the immediate successor of $t$ in $\tem(\D)$, and there is an extended type for $t'$ whose trace is given by $\boldsymbol{x}'$. We define it as
$$
 \xi(t, \boldsymbol{x}, t', \boldsymbol{x}') ~=~ \suc(t', t) \land \bigvee_{\type'\text{ type for $\q$}} \xi_{\type'}(t, \boldsymbol{x}, t', \boldsymbol{x}'),
$$
with $\xi_{\type'}(t, \boldsymbol{x}, t', \boldsymbol{x}')$ saying that if $(\type, (\boldsymbol{tr}_\sigma)_{\sigma \in \varSigma_\Pi}, t)$ is an extended type for $t$ with $(\boldsymbol{tr}_\sigma)_{\sigma \in \varSigma_\Pi}$ given by $\boldsymbol{x}$, then $(\type', (\boldsymbol{tr}'_\sigma)_{\sigma \in \varSigma_\Pi}, t')$ can be the next extended type with $(\boldsymbol{tr}'_\sigma)_{\sigma \in \varSigma_\Pi}$ given by $\boldsymbol{x}'$:
\begin{align*}
  &\xi_{\type'}(t, \boldsymbol{x}, t', \boldsymbol{x}') ~=~ \mathsf{ext}_{\type'}(t', \boldsymbol{x}') \land \bigwedge_{\sigma \not \in \type'} (\boldsymbol{x}_\sigma = \boldsymbol{x}'_\sigma) \land{} \\
  &
  \bigwedge_{\sigma \in \type'} \bigl[
  \big((u_{0\sigma}> s_{0\sigma}) \to (u_{0\sigma}' = t') \land (s_{0\sigma}' = t')\big) \land{}  ((u_{0\sigma}\leq s_{0\sigma}) \to (\boldsymbol{x}_{0\sigma} = \boldsymbol{x}_{0\sigma}')) \land {}
  \\
  &
  \quad \quad \big(\inr_{\range^-_\sigma}(t', s_{\ell_\sigma}) \to \bigwedge_{1 \leq i < \ell_{\sigma}-1}(\boldsymbol{x}_{i\sigma} = \boldsymbol{x}_{i\sigma}') \land{}  (u_{\ell_\sigma \sigma} = u_{\ell_\sigma \sigma}') \land (s_{\ell_\sigma \sigma}' = t')\big) \land{}
  \\
  &
  \quad \quad \big(\neg \inr_{\range^-_\sigma}(t', s_{\ell_\sigma}) \to \bigwedge_{1 < i \leq \ell_{\sigma}-1}(\boldsymbol{x}_{i \sigma} = \boldsymbol{x}_{{i-1} \sigma}') \land{}  (u_{\ell_\sigma \sigma}' = t') \land (s_{\ell_\sigma\sigma}' = t')\big)\bigr].
\end{align*}
The formula $\mathsf{ext}_\type(t, \boldsymbol{x})$ defines an extended type for $t$ in $\D$:
$$
\mathsf{ext}_\type(t, \boldsymbol{x}) ~=~ \delta_{\type}(t) \land{} \bigwedge_{\diamondmin_\range \sigma \in \type} \big(\bigvee_{0 \leq i \leq \ell_\sigma} \mathsf{int}_\range(t, u_{i\sigma}, s_{i\sigma})\big) \land{}
\bigwedge_{\diamondmin_\range \sigma \notin \type} \big(\bigwedge_{0 \leq i \leq \ell_\sigma} \neg \mathsf{int}_\range(t, u_{i\sigma}, s_{i \sigma})\big).
$$
Finally, $\mathsf{first}_\type(\boldsymbol{x})$ is $\bot$ if there is $\diamondmin_\range \sigma \in \type$ and otherwise it is
\begin{align*}
&   \delta_\type(\min) \land  \bigwedge_{\sigma \not \in \type} \bigwedge_{0 \leq i \leq \ell_\sigma} ((u_{i\sigma} = \max) \land (s_{i\sigma} = \min)) \land{}\\
&   \bigwedge_{\sigma \in \type}( \bigwedge_{0 < i < \ell_\sigma} (u_{i\sigma} = \max) \land (s_{i\sigma} = \min)) \land{}
 (u_{0\sigma} = \min) \land (s_{0\sigma} = \min) \land{} (u_{\ell_\sigma \sigma} = \min) \land (s_{\ell_\sigma \sigma} = \min)),
\end{align*}
saying that the intervals in the initial extended type are set correctly. That $\Phi_\Pi$ is as required follows from Lemma~\ref{char3}.

One can now modify $\Phi_\Pi$ to obtain an FO(TC)-rewriting of $\q$.
As before, to obtain a rewriting $\Phi_\q(x)$, we need a formula $\Phi_{\neg A}(x)$ that holds true on $\D$ iff there exists a model of $(\Pi, \D)$ such that $\neg A$ is true at $x$ in this model. We define $\Phi_{\neg A}(x)$ as a conjunction of $\Phi_\Pi$ and
\begin{multline*}
\exists \boldsymbol{x}, \boldsymbol{x}', \boldsymbol{x}'', t_1\, \bigl(((x = 0) \land  \bigvee_{\substack{\type \text{ type for }\Pi\\\neg A \in \type}} \mathsf{first}_\type(\boldsymbol{x})) \lor
(\mathsf{first}_\type(\boldsymbol{x}) \land [\text{TC}_{t, \boldsymbol{x}, t', \boldsymbol{x}'}\theta(t, \boldsymbol{x}, t', \boldsymbol{x}')](\min, \boldsymbol{x}, t_1, \boldsymbol{x}') \land{} \\
\mathsf{suc}(x,t_1) \land \bigvee_{\substack{\type'\text{ type for $\Pi$}\\\neg A \in \type'}} \theta_{\type'}(t_1, \boldsymbol{x}', x, \boldsymbol{x}''))\bigr).
\end{multline*}
The negation of $\Phi_{\neg A}(x)$ is the required rewriting $\Phi_\q(x)$.

$(ii)$ It will be convenient to assume a restricted version of our normal form~\eqref{eq:rule}, where $\diamondmin_\range$ operators do not occur with $\range = [0, r\rangle$. Every $\hMTL$ program $\Pi$ can be converted to this form by replacing $\diamondmin_{[0,r\rangle} A$ by $A \lor \diamondmin_{(0,r\rangle} A$ and then expressing, e.g., $A \lor \diamondmin_{(0,r\rangle} A \to B$ by a pair of rules $A \to B,\, \diamondmin_{(0,r\rangle} A \to B$. (For each $\neg \diamondmin_{[0,r\rangle} \neg A$ we substitute it by $A \land \neg \diamondmin_{(0,r\rangle} \neg A$.) Let $(\boldsymbol{tr}_\sigma)_{\sigma \in \Sigma_\Pi}$ be a trace for $t' \in \tem(\D)$ and $\Lambda$ be the set $\diamondmin_\range \sigma$ from $\Pi$ such that $int_\range(t, u_i, s_i)$, for some $[u_i, s_i]$ in  $\boldsymbol{tr}_\sigma$. Let $\varDelta$ be a set of $P$ from $\Pi$.
We call a type $\type$ for $\Pi$ \emph{minimal for} $t$ with respect to $\varDelta$ \emph{and} $(\boldsymbol{tr}_\sigma)_{\sigma \in \Sigma_\Pi}$ if every $\vartheta$ from $\Pi$ is in $\type$ iff $\vartheta$ is in the closure of $\Lambda$ and $\varDelta$ under rules~\eqref{eq:rule}. We say that a type $\type$ is \emph{minimal initial} with respect to $\varDelta$ if it is minimal for some (every) $t$ with respect to an empty trace with all $[u_j, s_j]$ in $\boldsymbol{tr}_\sigma$ are $[*, *]$ for all $\sigma \in \Sigma_\Pi$.
\begin{lemma}
Let $t_0 < \dots < t_m$ be the timestamps in $\D$. Then $\Pi$ and $\D$ are consistent iff there exists a sequence $(\boldsymbol{t}_i, (\boldsymbol{tr}^i_\sigma)_{\sigma \in \Sigma_\Pi}, t_i)$ of extended types for $t_i$, $0 \leq i \leq m$, satisfying the conditions of Lemma~\ref{char3} and such that: $(i)$ $\type_0$ is minimal initial w.r.t. $\D(t_0)$, $(ii)$ $\type_i$ is the minimal for $t_i$ w.r.t. $\D(t_i)$ and $(\boldsymbol{tr}^{i-1}_\sigma)_{\sigma \in \Sigma_\Pi}$.
\end{lemma}
Note that each type $\type_0$ in the lemma above is uniquely determined by $\D$ and so is the trace $(\boldsymbol{tr}^0_\sigma)_{\sigma \in \Sigma_\Pi}$. Then type $\type_1$ is uniquely determined by $(\boldsymbol{tr}^0_\sigma)_{\sigma \in \Sigma_\Pi}$ and $\D$ and so is $(\boldsymbol{tr}^1_\sigma)_{\sigma \in \Sigma_\Pi}$, etc. Therefore, we can replace in $\Phi_\Pi$ the formula $\xi(t, \boldsymbol{x}, t', \boldsymbol{x}')$ by $\xi'(t, \boldsymbol{x}, t', \boldsymbol{x}')$ such that for given values of $t$ and $\boldsymbol{x}$, there are \emph{unique} values of $t'$ and $\boldsymbol{x}'$ for which $\xi'(t, \boldsymbol{x}, t', \boldsymbol{x}')$ holds.
\end{proof}


\section{Conclusion}

In this paper, we made a first step towards understanding the data complexity of answering queries mediated by ontologies with \MTL{} operators and their rewritability into standard database query languages. By imposing natural restrictions on the ranges $\range$  constraining the operators $\diamondmin_\range$ and $\boxminus_\range$, and by  distinguishing between arbitrary, Horn and core ontologies, we identified classes of \MTL-OMQs that are rewritable to FO$(<)$, FO$(<,+)$, FO(RPR), FO(DTC), FO(TC), and datalog(FO). Unrestricted \MTL-OMQs were shown to be \coNP-hard. The rewritability results look encouraging, though much remains to be done to make our rewritings practical, especially in the presence of more expressive atemporal (description logic or datalog) ontologies.

We can extend our language with constrained operators since $\mathcal{S}_\range$. In this case, \hMTL{} remains \PTime-complete  (but \cMTL{} becomes \PTime-hard) and Theorem~\ref{non-punc} holds, too.
We believe that our \hMTL{} can also be extended with $\boxminus_\range$ in the rule heads (cf.~\cite{DBLP:journals/tcs/Brzoska98}): Theorems~\ref{start-complexity}~$(ii)$ and ~\ref{non-punc}~$(i)$ also hold in this case, but so far we have not managed to prove Theorem~\ref{non-punc}~$(ii)$ for such rules.
Extending \MTL{} with future-time operators is also interesting, in which case Theorems~\ref{start-complexity} and~\ref{aco} remain to hold.
%
%
Finally, we are looking into \MTL-OMQs under the continuous (state-based) semantics, where the techniques developed above do not apply directly.



\medskip
\noindent
\textbf{Acknowledgements} This work was supported by EPSRC UK grant EP/S032282, NCN Poland grant 2016/23/N/HS1/ 02168, and the Foundation for Polish Science (FNP). We would like to thank Stanislav Kikot for his help with the proof of Theorem~5.


\bibliographystyle{plain}
\bibliography{MTL-BIB}

\appendix


\section{$\mathsf{dist}_{=r}(x,y)$ and Related Formulas}\label{sec:dist}

We show that for every $r \in \mathbb{Q}^{\ge 0}_2 $ we can define an FO-formula $\mathsf{dist}_{=r}(x,y)$ that holds in $\mathcal{D}$ iff $x,y \in \varTheta$ and $ \bar x - \bar y =r $.
Let $r \in \mathbb{Q}^{\ge 0}_2$ and $h,k \in \mathbb{N}$ be such that $r= h/2^k$.
Then, we define:
\begin{align*}
\mathsf{dist}_{=r}(x,y) = \;
&\forall j\, \Big(\big(\mathsf{bit}_{\it in}(x, j, 0) \land \mathsf{bit}_{\it in}^{+ h /2^k}(y, j, 0)\big) \lor \big(\mathsf{bit}_{\it in}(x, j, 1) \land \mathsf{bit}_{\it in}^{+ h /2^k}(y, j, 1)\big)\Big) \land {}
\\
&\forall j\, \Big(\big(\mathsf{bit}_{\it fr}(x, j, 0) \land \mathsf{bit}_{\it fr}^{+ h /2^k}(y, j, 0)\big) \lor \big(\mathsf{bit}_{\it fr}(x, j, 1) \land \mathsf{bit}_{\it fr}^{+ h /2^k}(y, j, 1)\big)\Big),
\end{align*}
where  $\mathsf{bit}_{\it in}^{+ h /2^k}(y, j, v)$ states that $v$ is the $j$-th
bit of the integer part of $\bar y + h/2^k$, and $\mathsf{bit}_{\it fr}^{+ h
/2^k}(y, j, v)$ states that $v$ is the $j$-th bit of the fractional part of $\bar
y + h/2^k$. We define $\mathsf{bit}_{\it in}^{+ h /2^k}(y, j, v)$ and
$\mathsf{bit}_{\it fr}^{+ h /2^k}(y, j, v)$ inductively by means of the following
FO($<$)-formulas, where $\ell$ is the last (maximal) element of the domain
$\Delta$ of $\mathcal{D}$, $d\in\mathbb{Q}^{\geq 0}_2$, and $u = \ell - k$ (which
can can be easily defined using $<$):
\begin{align*}
    \mathsf{bit}_{\it fr}^{+ 0 }(y,  j, v) =   \mathsf{bit}_{\it fr}&(y, j, v) ,
    \\
    \mathsf{bit}_{\it fr}^{ +d+(1/2^k) }(y, j, v)  =  \exists u \Bigl( & (u = \ell - k) \land \Bigl( \bigl((j \leq u) \land \mathsf{bit}_{\it fr}^{ + d }(y, j, v) \bigr) \lor{} \\
     &\bigl( (v = 0) \land \mathsf{bit}_{\it fr}^{ + d  }(y, j, 0) \land \exists j' ((u < j' < j) \land \mathsf{bit}_{\it fr}^{ + d }(y, j', 0)) \bigr) \lor{} \\
  &  \bigl( (v = 0) \land \mathsf{bit}_{\it fr}^{  +d}(y, j, 1) \land \forall j' ((u < j' < j) \to \mathsf{bit}_{\it fr}^{  + d }(y, j', 1))\bigr) \lor{}\\
  & \bigl( (v = 1) \land \mathsf{bit}_{\it fr}^{ + d  }(y, j, 1) \land \exists j' ((u < j' < j) \land \mathsf{bit}_{\it fr}^{ + d }(y, j', 0)) \bigr) \lor{} \\
  & \bigl( (v = 1) \land \mathsf{bit}_{\it fr}^{  + d }(y, j, 0) \land \forall j' ((u < j' < j) \to \mathsf{bit}_{\it fr}^{ + d}(y, j', 1)) \bigr) \Bigr) \Bigr),\\
 \mathsf{bit}_{\it in}^{+0}(y, j, v) = \mathsf{bit}_{\it in}&(y, j, v), 
 \end{align*}
\begin{align*}
 \mathsf{bit}_{\it in}^{+ d +(1/2^k)}(y, j, v) \equiv \exists u \Bigl( & (u = \ell - k) \land \Bigl(  \\
     &\bigl( (v = 0) \land \mathsf{bit}_{\it in}^{ + d }(y, j, 0) \land \exists j' ( ((j' < j) \land \mathsf{bit}_{\it in}^{+d}(y, j', 0)) \lor{} \\
      &\hspace{5cm} ((u < j'   \leq  \ell ) \land \mathsf{bit}_{\it fr}^{+d }(y, j', 0))) \bigr) \lor{} \\
  &  \bigl( (v = 0) \land \mathsf{bit}_{\it in}^{ + d }(y, j, 1) \land \forall j' (((j' < j) \to \mathsf{bit}_{\it in}^{+ d}(y, j', 1)) \land{} \\
  & \hspace{5cm} (u < j' <  \leq  \ell  ) \to \mathsf{bit}_{\it fr}^{ + d } (y, j', 1)) \bigr) \lor{}\\
  & \bigl( (v = 1) \land \mathsf{bit}_{\it in}^{ +d }(y, j, 0) \land \exists j' ( ((j' < j) \land \mathsf{bit}_{\it in}^{+ d  }(y, j', 0)) \lor{} \\
      &\hspace{5cm} ((u < j' <  \leq  \ell  ) \land \mathsf{bit}_{\it fr}^{ + d }(y, j', 0))) \bigr) \lor{} \\
  &  \bigl( (v = 1) \land \mathsf{bit}_{\it in}^{ +d }(y, j, 1) \land \forall j' (((j' < j) \to \mathsf{bit}_{\it in}^{ + d }(y, j', 1)) \land{} \\
  & \hspace{5cm}  ((u < j' <  \leq  \ell  ) \to \mathsf{bit}_{\it fr}^{+ d }(y, j', 1))) \bigr) \Bigr) \Bigr).
\end{align*}
The formulas $\mathsf{dist}_{<r}(x,y)$ for $r \in Q_2^{\geq 0} \cup \{\infty\}$
and $\mathsf{dist}_{> r}(x,y)$ are defined by modifications of
$\mathsf{dist}_{=r}(x,y)$. Using these, we can further define FO-formulas
$\inr_\range(x,y)$ and $\mathsf{int}_\range(t, u, s)$. 
%

\section{Divisibility and $\mathsf{div}_d$}\label{sec:div}

We will show how to define an FO(RPR)-formula $\mathsf{div}_d(x,y)$ that is true
in $\D$ iff $u,v \in \varTheta$ and $\bar x - \bar y = d$.

Let $\D$ be an arbitrary  FO-structure.
First, we will  define FO-formulas $b_{fr}(x,y,i)$, $b_{in}(x,y,i)$, $\mathsf{dif}_{fr}(i,x,y)$, and $\mathsf{dif}_{in}(i,x,y)$ such that:
\begin{itemize}
\item[--]  $b_{fr}(x,y,i)$ is true in $\D$ iff when using the column method to
    subtract  $\bar y$ from $\bar x$, the $i$-th bit from the fractional part of
    $\bar x$ is borrowed;
\item[--]  $b_{in}(x,y,i)$ is true in $\D$ iff when using the column method to
    subtract  $\bar y$ from $\bar x$, the $i$-th bit from the integral part of
    $\bar x$ is borrowed;
\item[--] $\mathsf{dif}_{fr}(i,x,y)$ is true in $\D$ iff the $i$-th bit of the
    fractional  part of $\bar x- \bar y$ is $1$;
\item[--] $\mathsf{dif}_{in}(i,x,y)$ is true in $\D$ iff the $i$-th bit of the
    integral part of $\bar x- \bar y$ is $1$.
\end{itemize}

Let the binary representations of fractional parts of $\bar x$ and $\bar y$, be
$x_\ell \dots x_0$ and  $y_\ell \dots y_0$, respectively. Let $b_i \in \{0,1 \}$
indicate whether a bit is borrowed from $x_i$ when subtracting  $y$ from $x$ using
the column method. Clearly, $b_0=0$ and the value of $b_i$ for $i\neq 0$ can be
determined as follows: 
$$
 b_i \equiv (\neg x_{i-1} \land y_{i-1}) \lor (b_{i-1} \land (\neg x_{i-1} \lor y_{i-1})).
$$
Using the equivalence above, we define $b_{fr}(x,y,i) $ as follows:
\begin{align*}
 b_{fr}(x,y,i) =  \; &  \exists j ((j < i) \land (bit_{fr}(j, x, 0) \land bit_{fr}(j, y, 1)) \land {}\\
&  \forall k ((j < k < i) \to  (bit_{fr}(k, x, 0) \lor bit_{fr}(k, y, 1)))).
\end{align*}
In what follows we will denote the binary representation of the fractional part of $x-y$ as $z_\ell \dots z_0$, which can be defined as follows (for $0$ and $1$  treated as truth-values):
$$
z_i \equiv (b_i \land  (x_i  \leftrightarrow y_i) ) \lor (\neg b_i \land (x_i \leftrightarrow \neg y_i)).
$$
Thus, we can define $\mathsf{dif}_{fr}(i,x,y)$ as:
\begin{eqnarray*}
\mathsf{dif}_{fr}(i,x,y) = \exists v, v' (b_{fr}(x,y,i) \land  bit_{fr}(i, x, v) \land bit_{fr}(i, y, v') \land v = v')  \lor
\\
 \exists v, v' (\neg b_{fr}(x,y,i) \land  bit_{fr}(i, x, v) \land bit_{fr}(i, y, v')  \land v \neq v')
.
\end{eqnarray*}
In a similar way we can define the formulas $b_{in}(x,y,i)$ and $\mathsf{dif}_{in}(i,x,y)$ which are about the integral parts of $x$ and $y$.
In particular, we define $b_{in}(x,y,i)$ as:
\begin{eqnarray*}
 b_{in}(x,y,i) =  \bigl(\exists j ((j < i) \land (bit_{in}(j, x, 0) \land bit_{in}(j, y, 1)) \land  \forall k ((j < k < i) \to \\
 (bit_{in}(k, x, 0) \lor bit_{in}(k, y, 1))))\bigr) \lor \\
 \bigl(br_{fr}(x,y,\ell) \land ((bit_{fr}(\ell, x, 0) \lor bit_{fr}(\ell, y, 1))) \land \forall k ((k < i) \to \\
 (bit_{in}(k, x, 0) \lor bit_{in}(k, y, 1)))\bigr),
\end{eqnarray*}
where $\ell$ is (a constant for) the last element of $\varDelta$.
The formula $\mathsf{dif}_{in}(i,x,y)$ is defined analogously to  $\mathsf{dif}_{fr}(i,x,y)$.

\medskip

Next, we will make use of the integer divisibility automaton $\mathcal A_k = (Q,
\{0,1\}, q_0, q_a, \delta)$, which is an NFA taking as an input an inverted binary
representation $z_0 z_1 \dots z_n$ of an integer number $z$ and reaching the
accepting state $q_a$ iff $z$ is divisible by $k$. It is known that for any
integer $k$ we can construct such an automaton. Recall that $z_\ell \dots z_0$ is
the fractional part of $\bar x- \bar y$, i.e., $z_i = 1$ iff
$\mathsf{dif}_{fr}(i,x,y)$ is true in $\D$. Analogously, we denote the integral
part of $\bar x- \bar y$ by $w_{\ell} w_{\ell-1} \dots w_0$, i.e., $w_i=1$  iff
$\mathsf{dif}_{in}(i,x,y)$ is true in $\D$.

Next, we claim that for any $n\in \mathbb{N}$, $k \in \mathbb{Z}$, and a state $q$  of the divisibility automaton $\mathcal A_k$, we can construct an FO-formula $reach_{q, \mathcal A_k}^n(x,y)$ which is true in $\D$ iff $x,y \in \varTheta$ and either:
\begin{itemize}
  \item[--] $\ell \geq n$, $z_i = 0$ for $i < \ell - n$, and $\mathcal A_k$ has a run from $q_0$ to $q$ on $z_{\ell-n} \dots z_{\ell-1} z_{\ell}$; or

\item[--] $\ell < n$ and $\mathcal A_k$ has a run from $q_0$ to $q$ on
      $\underbrace{ 0 \dots 0}_{n-\ell} z_0 \dots z_{\ell}.$
\end{itemize}
To construct $reach_{q, \mathcal A_k}^n$ one needs to consider paths of length bounded by $n$ in $\mathcal A_k$, whose number is finite, and therefore the formula is constructible in FO (we leave details to the reader).

Let $f_d$ be the number of significant bits in the fractional part of the binary representation of $d$ (e.g., $f_d = 3$ for $d = 10001.101$).
Then, we can prove the following result:

\begin{lemma}
Let $d \in \mathbb{Q}_2^{\geq 0}$, $D = d 2^{f_d}$, and let $\mathcal A_D = (Q,
\{0,1\}, q_0, q_a, \delta)$ be  the divisibility automaton for $D$. Then, for any data instance $\D$ and $x, y \in \Theta$, the value of $\bar x- \bar y$ is divisible by $d$ iff there exists $q
\in Q$ such that $reach_{q, \mathcal A_D}^{f_d}(x,y)$ is true in $\D$, and   $\mathcal A_D$ has a run from $q$ to $q_a$  on $w_{0}
w_{1} \dots w_{\ell}$, where $w_{l},  \dots w_1 w_{0} $ is the binary
representation of the integral part of $\bar x- \bar y$.
\end{lemma}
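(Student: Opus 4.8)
The statement is a purely arithmetic-plus-automata equivalence, so the plan is to reduce divisibility of the dyadic number $\bar x-\bar y$ by $d$ to divisibility of a single non-negative integer by $D$, and then to read that integer off the bits of $\bar x-\bar y$ in the order in which $\mathcal A_D$ consumes them.

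First I would record the elementary identity underlying the whole argument. Put $N=(\bar x-\bar y)\cdot 2^{f_d}$ and recall $D=d\cdot 2^{f_d}\in\mathbb N$. The claim is that $\bar x-\bar y$ is divisible by $d$ iff $N$ is a non-negative integer with $D\mid N$. For the forward direction, if $\bar x-\bar y=n d$ with $n\in\mathbb N$, then the fractional part of $n d$ equals $(nD\bmod 2^{f_d})/2^{f_d}$ and hence has at most $f_d$ significant bits, so $N=nD$ is an integer and $D\mid N$. For the converse, if $N\in\mathbb N$ and $D\mid N$, set $n=N/D$; then $\bar x-\bar y=N/2^{f_d}=nD/2^{f_d}=n d$. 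Finally, when $N\notin\mathbb Z$ --- equivalently, the fractional part $z_\ell\dots z_0$ of $\bar x-\bar y$ has a nonzero bit strictly below the $f_d$-th binary place --- the same computation shows $\bar x-\bar y$ cannot be a multiple of $d$, and on the other side the integrality condition built into the definition of $reach^{f_d}$ is violated, so both sides of the asserted equivalence are false and there is nothing to prove.

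Next I would describe the binary word that $\mathcal A_D$ is run on. Read least-significant-bit first, the integer $N$ is the concatenation of a low block coming from the fractional part of $\bar x-\bar y$ and the block $w_0 w_1\dots w_\ell$ coming from its integral part, where the $w_i$ (resp.\ $z_i$) are exactly the bits picked out by the FO-formulas $\mathsf{dif}_{in}(i,x,y)$ (resp.\ $\mathsf{dif}_{fr}(i,x,y)$) defined earlier. The low block is precisely what the two cases in the definition of $reach^{f_d}_{q,\mathcal A_D}(x,y)$ enumerate: when $\ell\ge f_d$ it is the word $z_{\ell-f_d}\dots z_{\ell-1}z_\ell$ \emph{together with} the side condition that the still lower bits vanish, which is exactly the condition that $N$ be an integer; when $\ell<f_d$ the block is a prefix of padding zeros followed by all of $z_0\dots z_\ell$, and then $N$ is automatically integral. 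Consequently an accepting run of $\mathcal A_D$ on the LSB-first encoding of $N$ splits as a run from $q_0$ to some intermediate state $q$ over the low block --- which holds iff $reach^{f_d}_{q,\mathcal A_D}(x,y)$ is true --- followed by a run from $q$ to $q_a$ over $w_0 w_1\dots w_\ell$.

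Putting the two steps together: by correctness of the divisibility automaton, $D\mid N$ iff $\mathcal A_D$ has an accepting run on the LSB-first encoding of $N$, iff there is $q\in Q$ with $reach^{f_d}_{q,\mathcal A_D}(x,y)$ true and a run of $\mathcal A_D$ from $q$ to $q_a$ on $w_0 w_1\dots w_\ell$; chaining this with the first step yields the lemma. I expect the only real friction to be the index bookkeeping: matching ``the number of significant fractional bits of $d$'' against the indices $\ell-n$ appearing in $reach^n$, handling the boundary case $\ell=f_d$, and checking that surplus leading zeros on the most-significant side of $N$ (e.g.\ when $w_\ell=0$) are harmless --- which they are, since an automaton recognizing binary strings whose value is divisible by $D$ is invariant under leading zeros and stays in, or returns to, $q_0=q_a$ on such a prefix. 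None of this is deep, but it is where a careful write-up must spend its effort.
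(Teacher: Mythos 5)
The paper states this lemma without proof, so there is no in-paper argument to compare against; your strategy --- scale by $2^{f_d}$ to turn $d\mid(\bar x-\bar y)$ into ``$N=(\bar x-\bar y)2^{f_d}$ is an integer and $D\mid N$'', then split an accepting run of $\mathcal A_D$ on the LSB-first encoding of $N$ into a fractional prefix captured by $reach^{f_d}_{q,\mathcal A_D}$ and the integral suffix $w_0\dots w_\ell$ --- is clearly the intended one, and your opening arithmetic identity is correct.

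The gap sits exactly where you defer to ``index bookkeeping'' and then assert the conclusion instead of checking it: the claim that the word enumerated by $reach^{f_d}$, together with its side condition, ``is precisely'' the low block of $N$ together with the integrality of $N$. Under the paper's conventions the fractional part of $\bar x-\bar y$ is $z_\ell\dots z_0$ with $z_\ell$ worth $2^{-1}$ and $z_j$ worth $2^{-(\ell-j+1)}$; hence $N$ is an integer iff $z_j=0$ for all $j\le \ell-f_d$, and its $f_d$ low-order bits are $z_{\ell-f_d+1},\dots,z_\ell$. But $reach^{f_d}$ feeds the automaton the $f_d+1$ bits $z_{\ell-f_d},\dots,z_\ell$ and only demands $z_j=0$ for $j<\ell-f_d$, so the integer whose divisibility by $D$ is actually tested is $2N+z_{\ell-f_d}$, and the bit worth $2^{-(f_d+1)}$ is neither forced to vanish nor accounted for in the scaling. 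Your chain of equivalences therefore closes on ``$D\mid 2N$'' rather than ``$D\mid N$'', and these differ: for $d=2$ (so $f_d=0$, $D=2$) and $\bar x-\bar y=1$ the right-hand side holds (the automaton reads the word $0,1,0,\dots$, of value $2$) although $2\nmid 1$; for $d=1$ and $\bar x-\bar y=1/2$ it also holds (value $1$) although $1/2$ is not an integer multiple of $1$ --- the latter also refutes your remark that both sides are automatically false when $N\notin\mathbb Z$. This off-by-one has to be resolved --- e.g.\ by running the prefix on the $f_d$ bits $z_{\ell-f_d+1},\dots,z_\ell$ with the side condition $z_j=0$ for $j\le\ell-f_d$ (equivalently, using $reach^{f_d-1}$), or by keeping $reach^{f_d}$ but testing divisibility by $2D$ and strengthening the vanishing condition --- before your otherwise sound run-splitting step can be chained with the arithmetic identity.
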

%
Now, let $d$, $D$, and  $\mathcal A_D = (Q, \{0,1\}, q_0, q_a, \delta)$ be as
stated in the lemma. For every $q \in Q$ we will introduce  an FO(RPR)-formula
expressing that $R_q(i,x,y)$ is true in $\D$ iff either:
\begin{itemize}
\item[--]  $i=0$  and there exists $q' \in Q$ such that $reach_{q', \mathcal A_D}^{f_d}(x,y)$ and $q \in \delta(q',w_0)$; or

\item[--] $i>0$ and there exists $q' \in Q$ such that $R_{q'}(i-1,x,y)$ is true in $\D$ and $q \in \delta(q',w_i)$.
\end{itemize}
This formula, denoted by  $\alpha_q$, is as follows:
\begin{align*}
R_q(i,x,y) \equiv & \bigl((i = 0) \land \mathsf{dif}_{in}(0, x,y) \land \displaystyle\bigvee_{q \in \delta(q',1)
} (reach_{q', \mathcal A_D}^{f_d}(x,y))\bigr) \lor {} \\
& \bigl((i = 0) \land \neg \mathsf{dif}_{in}(0, x,y) \land \displaystyle\bigvee_{
q \in \delta(q',0)
} (reach_{q', \mathcal A_D}^{f_d}(x,y))\bigr) \lor {} \\
& \bigl( \mathsf{dif}_{in}(i, x,y) \land \displaystyle\bigvee_{
q \in \delta(q',1)
} R_{q'}(i-1, x, y)\bigr) \lor {} \\
& \bigl( \neg \mathsf{dif}_{in}(i, x,y) \land \displaystyle\bigvee_{
q \in \delta(q',0)
} R_{q'}(i-1, x, y)\bigr).
\end{align*}
Finally, we define $\mathsf{div}_d(x,y)$ by means of the following FO(RPR)-formula, where $q_0, \ldots , q_n$ are all states in $Q$:
\begin{align*}
 \mathsf{div}_d(x,y) =
\left[ \begin{array}{l}
\alpha_{q_0}\\
\dots\\
\alpha_{q_n}
\end{array}\right] R_{q_a}(\ell, x,y)
\end{align*}
Intuitively, the formula uses simultaneous recursion to check whether the accepting state is reached on the input $w_0 w_1 \dots w_{\ell}$.

\end{document}